\documentclass{CSML}
\pdfoutput=1

% LMCS Layouting Macros
\usepackage{lastpage}
% Volume(Issue:Number)Year
\lmcsheading{}{1--\pageref{LastPage}}{}{}%
{Jan.~31, 2017}% Submission date
{Jun.~29, 2018}% Publication date
{}
%%%%%%%%%%%%%%%%%%%%%%%%%%%%%%%%%%%%%%%%%%%%%%%%%%%%%%%%%%%%%%%%%%%%%%%%%%%%%%%%%%TITLE PAGE

%% mandatory lists of keywords
\keywords{Higher-Order Logic Programming, Negation in Logic Programming, Extensional Semantics}

%% MANDATORY 2012 ACM classification
%% see http://dl.acm.org/ccs.cfm
% \ACMCCS{[{\bf Theory of computation}]: Semantics and reasoning--Program semantics--Denotational semantics;}

%% read in additional TeX-packages or personal macros here:
%% e.g. \usepackage{tikz}
\usepackage{hyperref}
\hypersetup{hidelinks}
\usepackage{amsmath}
\usepackage{stmaryrd}
\usepackage{latexsym}

%%%%%%%%%%%%%%%%%%%%%%%%%%%%%%%%%%%%%%%%%%%%%%%%%%%%%%%%%%%%%%%%%%%%%%%%%%%%%%%%COMMANDS
%%\input{myMacros.tex}

\newcommand{\hide}[1]{}

\newcommand*{\pnot}{\mathord{\sim}}

\newcommand{\mo}[1]{\llbracket#1\rrbracket}

\newcommand{\mwrs}[3]{\llbracket#1\rrbracket_{#3}\left(#2\right)}

\newcommand{\vgr}[2]{#2(#1)}

\newcommand{\vwrt}[2]{\mathit{val}_{#2}(#1)}

\newcommand{\aleq}[1][]{\sqsubseteq_{#1}}
\newcommand{\ale}[1][]{\sqsubset_{#1}}

\newcommand{\mgp}{M_\mathsf{Gr(P)}}
\newcommand{\wfmp}{\mathcal{M}_\mathsf{P}}
\newcommand{\wgp}{M_\mathsf{Gr(P)}}
\newcommand{\baseap}[1]{I_{#1}}

%\ooalign{$\hidewidth\uparrow\hidewidth$\cr$\phantom{\cdot}$}}}
%\ooalign{$\hidewidth\downarrow\hidewidth$\cr$\phantom{\cdot}$}}}

\newcommand{\exeq}[1][]{\cong_{#1}}

\newcommand{\unk}{0}

%% define non-standard environments BEYOND the ones already supplied
%% here

%% due to the dependence on amsart.cls, \begin{document} has to occur
%% BEFORE the title and author information:

\makeatletter
\renewcommand\paragraph{\@startsection{paragraph}{4}{\parindent}{\z@}%
                                    {-\fontdimen2\font}%
                                    {\normalfont\normalsize\bfseries}}
\makeatother

\begin{document}
\title{Extensional Semantics for Higher-Order Logic Programs with Negation{\rsuper*}}
\titlecomment{{\lsuper*}\kern-1ptA preliminary version of this paper has appeared in the
proceedings of the 15th European Conference on Logics in
Artificial Intelligence (JELIA), pages 447--462, 2016.}

\author[P.~Rondogiannis]{Panos Rondogiannis}	%required
\author[I.~Symeonidou]{Ioanna Symeonidou}	%optional
\address{Department of Informatics \& Telecommunications,
National and Kapodistrian University of Athens, Greece}	%required
%\thanks{thanks 1, optional.}	%optional
% \address{Department of Informatics \& Telecommunications,
% National and Kapodistrian University of Athens, Greece}	%optional
\email{\{prondo,i.symeonidou\}@di.uoa.gr}  %optional

\begin{abstract}
We develop an extensional semantics for higher-order logic programs with negation,
generalizing the technique that was introduced in~\cite{Bezem99,Bezem01} for positive
higher-order programs. In this way we provide an alternative extensional semantics for
higher-order logic programs with negation to the one proposed in~\cite{CharalambidisER14}.
We define for the language we consider the notions of {\em stratification} and {\em local stratification}, 
which generalize the familiar such notions from classical logic programming, and we demonstrate that for 
stratified and locally stratified higher-order logic programs, the proposed semantics never assigns the
{\em unknown} truth value. We conclude the paper by providing a negative result: we demonstrate
that the well-known stable model semantics of classical logic programming, if extended
according to the technique of~\cite{Bezem99,Bezem01} to higher-order logic programs,
does not in general lead to extensional stable models.

\end{abstract}
\maketitle              % typeset the title of the contribution

\section{Introduction}
Research results developed in~\cite{Wa91a,Bezem99,Bezem01,KRW05,CharalambidisHRW13} have
explored the possibility of designing higher-order logic programming languages with
{\em extensional semantics}. Extensionality implies that program predicates essentially
denote sets, and therefore one can use standard set theoretic concepts in order to understand
the meaning of programs and reason about them. The key idea behind this line of research
is that if we appropriately restrict the syntax of higher-order logic programming, then we
can achieve extensionality and obtain languages that are simple both from a semantic as-well-as
from a proof-theoretic point of view. Therefore, a main difference between the extensional and the more
traditional {\em intensional} higher-order logic programming languages~\cite{MN2012,CKW93-187}
is that the latter have richer syntax and expressive capabilities but a non-extensional semantics.

There exist at present two main extensional semantic approaches for capturing the meaning of
positive (i.e., negationless) higher-order logic programs. The first approach, developed
in~\cite{Wa91a,KRW05,CharalambidisHRW13}, uses classical domain-theoretic tools. The second
approach, developed in~\cite{Bezem99,Bezem01}, uses a fixed-point construction on the
ground instantiation of the source program. Despite their different philosophies, these two approaches
have recently been shown~\cite{CharalambidisRS15} to agree for a broad and useful class of
programs. This fact suggests that the two aforementioned techniques can be employed
as useful alternatives for the further development of higher-order logic programming.

A natural question that arises is whether one can still obtain an extensional semantics
if negation is added to programs. This question was recently undertaken
in~\cite{CharalambidisER14}, where it was demonstrated that the domain-theoretic
results obtained for positive logic programs in~\cite{Wa91a,KRW05,CharalambidisHRW13},
can be extended to apply to programs with negation. More specifically, as demonstrated
in~\cite{CharalambidisER14}, every higher-order logic program with negation has a
distinguished extensional model constructed over a logic with an infinite number of
truth values. It is therefore natural to wonder whether the alternative extensional technique
introduced in~\cite{Bezem99,Bezem01}, can also be extended to higher-order logic programs
with negation. It is exactly this question that we answer affirmatively. This brings
us to the following contributions of the present paper:
\begin{itemize}
\item We extend the technique of~\cite{Bezem99,Bezem01} to the class of higher-order
      logic programs with negation. In this way we demonstrate that Bezem's approach
      is more widely applicable than possibly initially anticipated. Our extension
      relies on the {\em infinite-valued semantics}~\cite{RondogiannisW05}, a technique
      that was developed in order to provide a purely model-theoretic semantics
      for negation in classical logic programming.

\item The extensional semantics we propose appears to be simpler compared to~\cite{CharalambidisER14}
      because it relies on the ground instantiation of the higher-order program and does not require
      the rather involved domain-theoretic constructions of~\cite{CharalambidisER14}.
      However, each technique has its own merits and we believe that both will prove to be
      useful tools in the further study of higher-order logic programming.

\item As a case study of the applicability of the new semantics, we define the notions
      of {\em stratification} and {\em local stratification} for higher-order logic programs
      with negation and demonstrate that for such programs the proposed
      semantics never assigns the {\em unknown} truth value. These two new notions generalize the
      corresponding ones from classical (first-order) logic programming. It is worth mentioning that
      such notions have not yet been studied under the semantics of~\cite{CharalambidisER14}.

\item We demonstrate that not all semantic approaches that have been successful for
      classical logic programs with negation lead to extensional semantics when applied
      to higher-order logic programming under the framework of~\cite{Bezem99,Bezem01}.
      In particular we demonstrate that the well-known stable model semantics of
      classical logic programming~\cite{GL88}, if extended according to the technique
      of~\cite{Bezem99,Bezem01} to higher-order logic programs with
      negation, does not in general lead to extensional stable models.
\end{itemize}

The rest of the paper is organized as follows. Section~\ref{intuitive} presents in an intuitive
way the semantics that will be developed in this paper. Section~\ref{infinite_valued} contains
background material on the infinite-valued semantics that will be the basis of our construction.
Section~\ref{syntax_of_language} introduces the syntax and Section~\ref{semantics_of_language} the
semantics of our source language. Section~\ref{extensionality} demonstrates that the proposed semantics is extensional.
In Section~\ref{stratification_section} the notions of {\em stratification} and {\em local stratification}
for higher-order logic programs are introduced. Section~\ref{stable} demonstrates that the stable
model semantics do not in general lead to extensional stable models when applied to higher-order logic
programs. Section~\ref{conclusions} concludes the paper by discussing the connections of the
present work with that of Zolt\'{a}n \'{E}sik, to whom the present special issue is devoted, 
and by providing pointers to future work.

The present paper extends and revises the conference paper~\cite{RS16}. More specifically,
the present paper contains complete proofs of all the claimed results
(which were either missing or sketched in~\cite{RS16}), it contains the new Section ~\ref{stable},
and it has more detailed and polished material in most of the remaining sections.

\section{An Intuitive Overview of the Proposed Approach}\label{intuitive}
In this paper we consider the semantic technique for positive higher-order logic programs proposed in~\cite{Bezem99,Bezem01}, and we extend it in order to apply to programs with negation in clause bodies. Given a positive higher-order logic program, the starting idea behind Bezem's approach is to take its ``ground instantiation'', in which we replace variables with well-typed terms of the Herbrand Universe of the program (i.e., terms that can be created using only
predicate constants, function symbols, and individual constants that appear in the program). For example, consider the higher-order program below
(for the moment we use ad-hoc Prolog-like syntax):
\[
\begin{array}{l}
\mbox{\tt q(a).}\\
\mbox{\tt q(b).}\\
\mbox{\tt p(Q):-Q(a).}\\
\mbox{\tt id(R)(X):-R(X).}
\end{array}
\]
In order to obtain the ground instantiation of this program, we consider each clause
and replace each variable of the clause with a ground term that has the same type
as the variable under consideration (the formal definition of this procedure will be given in
Definition~\ref{ground_instantiation_definition}). In the above example, the variable {\tt X}
represents a data object, so it could be replaced by {\tt a} or {\tt b}, namely the only individual
constant symbols that appear in the program. The variables {\tt P} and {\tt Q}
have the same type, as they both represent a predicate that maps data objects to truth values
(in other words, a unary first-order predicate). Each of them could be replaced by any such
predicate that appears in the program, such as {\tt q}, or a more complex expression that also
maps data objects to truth values, such as {\tt id(q)}. Therefore, the above program defines
a relation {\tt q} that is true of {\tt a} and {\tt b}; a relation {\tt p} that is true of
a unary first-order relation {\tt Q} only if {\tt Q} is true of {\tt a}; and a relation {\tt id}
that is true of a unary first-order relation {\tt R} and a data object {\tt X} only if {\tt R}
is true of {\tt X}. The ground instantiation we obtain is the following infinite program:
\[
\begin{array}{l}
\mbox{\tt q(a).}\\
\mbox{\tt q(b).}\\
\mbox{\tt p(q):-q(a).}\\
\mbox{\tt id(q)(a):-q(a).}\\
\mbox{\tt id(q)(b):-q(b).}\\
\mbox{\tt p(id(q)):-id(q)(a).}\\
\mbox{\tt id(id(q))(a):-id(q)(a).}\\
\mbox{\tt id(id(q))(b):-id(q)(b).}\\
%\mbox{\tt p(id(id(q))):-id(id(q))(a).}\\
\hspace{2cm} \cdots
\end{array}
\]
One can now treat the new program as an infinite propositional one (i.e., each ground atom can be seen
as a propositional variable). This implies that we can use the standard least fixed-point construction
of classical logic programming (see for example~\cite{lloyd}) in order to compute the set of atoms
that should be taken as ``true''.  In our example, the least fixed-point will
contain atoms such as {\tt q(a)}, {\tt q(b)}, {\tt p(q)}, {\tt id(q)(a)}, {\tt id(q)(b)}, {\tt p(id(q))},
and so on.

The main contribution of Bezem's work was that he established that the least fixed-point
semantics of the ground instantiation of every positive higher-order logic program of the
language considered in~\cite{Bezem99,Bezem01},  is {\em extensional} in a sense that
can be intuitively explained as follows. In the above example {\tt q} and {\tt id(q)} 
can be considered equal since they are both true of exactly the constants {\tt a} and {\tt b}.
Therefore, we would expect that (for example) if {\tt p(q)} is true then {\tt p(id(q))}
is also true, because {\tt q} and {\tt id(q)} should be considered as interchangeable.
This property of ``interchangeability'' is formally defined in~\cite{Bezem99,Bezem01} and it is
demonstrated that it holds in the least fixed-point of the immediate consequence operator
of the ground instantiation of every program.

The key idea behind extending Bezem's semantics in order to apply to higher-order logic programs
with negation, is straightforward to state: given such a program, we first take its ground instantiation.
The resulting program is a (possibly infinite) propositional program with negation,
and therefore we can compute its semantics in any standard way that exists for obtaining
the meaning of such programs. For example, one could use the well-founded semantics~\cite{GelderRS91}
or the stable model semantics~\cite{GL88}, and then proceed to show that the well-founded model
(respectively, each stable model) is extensional in the sense of~\cite{Bezem99,Bezem01} (informally
described above). Instead of using the well-founded or the stable model semantics, we have chosen
to use a relatively recent proposal for assigning meaning to classical logic programs with negation,
namely the infinite-valued semantics~\cite{RondogiannisW05}. As it has been demonstrated
in~\cite{RondogiannisW05}, the infinite-valued semantics is compatible with the well-founded:
if we collapse the infinite-valued model to three truth values, we get the well-founded one.
There are three main reasons for choosing to proceed with the infinite-valued approach
(instead of the well-founded or the stable model approaches):
\begin{itemize}
\item Despite the close connections between the well-founded and the infinite-valued approaches,
      it has recently been demonstrated~\cite{RS17} that the well-founded-based adaptation of the technique 
      of~\cite{Bezem99,Bezem01} is {\em not} extensional.\footnote{The result in~\cite{RS17} was obtained while
      the present paper was under review.} Moreover, as we demonstrate in Section~\ref{stable}, 
      a stable-models-based adaptation of Bezem's technique, does {\em not} in general lead to extensional models.

\item An extension of the infinite-valued approach was used in~\cite{CharalambidisER14} to give the first
      extensional semantics for higher-order logic programs with negation. By developing our present approach
      using the same underlying logic, we facilitate the future comparison between the two approaches.

\item As it was recently demonstrated in~\cite{Esik15,CarayolE16}, the infinite-valued approach satisfies
      all identities of {\em iteration theories}~\cite{BloomE93}, while the well-founded semantics does not.
      Since iteration theories (intuitively) provide an abstract framework for the evaluation of the merits
      of various semantic approaches for languages that involve recursion, the results just mentioned give
      an extra incentive for the further study and use of the infinite-valued approach.
\end{itemize}
We demonstrate that the infinite-valued semantics of the ground instantiation
of every higher-order logic program with negation, is extensional. In this way we extend the results of~\cite{Bezem99,Bezem01}
which applied only to positive programs. The proof of extensionality is quite intricate and is performed
by a tedious induction on the approximations of the minimum infinite-valued model. As an immediate
application of our result, we show how one can define the notions of {\em stratification} and {\em local stratification}
for higher-order logic programs with negation.

\section{The Infinite-valued Semantics}\label{infinite_valued}
In this section we give an overview of the infinite-valued approach of~\cite{RondogiannisW05}.
As in~\cite{RondogiannisW05}, we consider (possibly countably infinite) propositional programs, consisting
of clauses of the form $\mathsf{p}\leftarrow \mathsf{L}_1,\ldots,\mathsf{L}_n$, where each $\mathsf{L}_i$
is either a propositional variable or the negation of a propositional variable; the $\mathsf{L}_i$ will
be called {\em literals}, {\em negative} if they have negation and {\em positive} otherwise. For some
technical reasons that will be explained just after Definition~\ref{ground_instantiation_definition},
we allow a positive literal $\mathsf{L}_i$ to also be one of the constants $\mathsf{true}$ and
$\mathsf{false}$.

The key idea of the infinite-valued approach is that, in order to give a logical semantics to negation-as-failure
and to distinguish it from ordinary negation, one needs to extend the domain of truth values. For example,
consider the program:
\[
\begin{array}{lll}
 {\tt p} & \leftarrow & \\
 {\tt r} & \leftarrow & \pnot {\tt p}\\
 {\tt s} & \leftarrow & \pnot {\tt q}\\
 {\tt t} & \leftarrow & \pnot {\tt t}
\end{array}
\]
According to negation-as-failure, both {\tt p} and {\tt s} receive the value {\em True}.
However, {\tt p} seems ``truer'' than {\tt s} because there
is a rule which says so, whereas {\tt s} is true only because we are never obliged to make
{\tt q} true. In a sense, {\tt s} is true only by default. For this reason, it was proposed
in~\cite{RondogiannisW05} to introduce a ``default'' truth value $T_1$ just below the ``real''
true $T_0$, and (by symmetry) a weaker false value $F_1$ just above (``not as false as'')
the real false $F_0$. Then, negation-as-failure is a combination of ordinary negation with
a weakening. Thus $\pnot F_0 = T_1$ and $\pnot T_0 = F_1$. Since negations can be iterated,
the new truth domain has a  sequence $\ldots,T_3, T_2, T_1$ of weaker and weaker truth values
below $T_0$ but above a neutral value $0$; and a mirror image sequence $F_1, F_2,F_3,\ldots$ above
$F_0$ and below $0$. Since our propositional programs are possibly countably infinite, we need a
$T_\alpha$ and a $F_\alpha$ for every countable ordinal $\alpha$. The intermediate truth value $0$ is
needed for certain atoms that have a ``pathological'' negative dependence on themselves (such as
{\tt t} in the above program). In conclusion, our truth domain $\mathbb{V}$ is shaped as follows:
$$F_0<F_1<\cdots<F_\omega<\cdots<F_\alpha<\cdots<0<\cdots<T_\alpha<\cdots<T_\omega<\cdots<T_1<T_0$$
and the notion of ``Herbrand interpretation of a program'' can be generalized:
\begin{defi}
An (infinite-valued) \emph{interpretation} $I$ of a propositional program $\mathsf{P}$ is a function from the
set of propositional variables that appear in $\mathsf{P}$ to the set $\mathbb{V}$ of truth values.
\end{defi}

For example, an infinite-valued interpretation for the program in the beginning of this section is
$I=\{({\tt p},T_3),({\tt q},F_0),({\tt r},0),({\tt s},F_2),({\tt t},T_0)\}$. As we are going to
see later in this section, the interpretation that captures the meaning of the above program is
$J=\{({\tt p},T_0),({\tt q},F_0),({\tt r},F_1),({\tt s},T_1),({\tt t},0)\}$.

We will use $\emptyset$ to denote the interpretation that assigns the $F_0$ value to all propositional
variables of a program.
If $v \in \mathbb{V}$ is a truth value, we will use $I\parallel v$ to denote the set of variables which are assigned the value $v$ by $I$.
In order to define the notion of ``model'', we need the following definitions:
\begin{defi}\label{interpretation}
Let $I$ be an interpretation of a given propositional program $\mathsf{P}$. For every negative
literal $\pnot \mathsf{p}$ appearing in $\mathsf{P}$ we extend $I$ as follows:
\[
             I(\pnot \mathsf{p}) = \left\{
                             \begin{array}{ll}
                             T_{\alpha + 1}, & \mbox{ if $I(\mathsf{p}) = F_\alpha$}\\
                             F_{\alpha + 1}, & \mbox{ if $I(\mathsf{p}) = T_\alpha$}\\
                             0,              & \mbox{ if $I(\mathsf{p}) = 0$}
                             \end{array}
                      \right.
\]
Moreover, $I(\mathsf{true}) = T_0$ and
$I(\mathsf{false}) = F_0$. Finally, for every conjunction of literals $\mathsf{L}_1,\ldots,\mathsf{L}_n$ appearing as
the body of a clause in $\mathsf{P}$, we extend $I$ by $I(\mathsf{L}_1,\ldots,\mathsf{L}_n) = min\{I(\mathsf{L}_1),\ldots,I(\mathsf{L}_n)\}$.
\end{defi}
\begin{defi}
Let $\mathsf{P}$ be a propositional program and $I$ an interpretation of $\mathsf{P}$. Then, $I$ {\em satisfies}
a clause $\mathsf{p} \leftarrow \mathsf{L}_1,\ldots,\mathsf{L}_n$ of $\mathsf{P}$ if $I(\mathsf{p}) \geq I(\mathsf{L}_1,\ldots,\mathsf{L}_n)$. Moreover, $I$ is a {\em model} of $\mathsf{P}$ if $I$ satisfies all
clauses of $\mathsf{P}$.
\end{defi}
As it is demonstrated in~\cite{RondogiannisW05}, every program has a {\em minimum}
infinite-valued model under an ordering relation $\aleq$, which compares interpretations in
a stage-by-stage manner. To formally state this result, the following definitions are necessary:
\begin{defi}
The \emph{order} of a truth value is defined as follows: $order(T_\alpha)=\alpha$, $order(F_\alpha)=\alpha$ and $order(0)=+\infty$.
\end{defi}
\begin{defi}
Let $I$ and $J$ be interpretations of a given propositional program $\mathsf{P}$ and $\alpha$ be a countable
ordinal. We write $I=_{\alpha} J$, if for all $\beta \leq \alpha$,
$I\parallel T_{\beta} = J\parallel T_{\beta}$ and $I\parallel F_{\beta} = J\parallel F_{\beta}$. We write
$I\sqsubseteq_{\alpha} J$, if for all $\beta < \alpha$, $I=_{\beta} J$ and, moreover,  $I\parallel T_{\alpha} \subseteq J\parallel T_{\alpha}$ and $I\parallel F_{\alpha} \supseteq J\parallel F_{\alpha}$.
We write $I \sqsubset_{\alpha} J$, if $I \sqsubseteq_{\alpha} J$ but $I =_{\alpha} J$ does not hold.
\end{defi}
\begin{defi}
Let $I$ and $J$ be interpretations of a given propositional program $\mathsf{P}$. We write $I\ale J$,
if there exists a countable ordinal $\alpha$  such that
$I \sqsubset_{\alpha} J$. We write $I\aleq J$ if either
$I = J$ or $I\ale J$.
\end{defi}
It is easy to see~\cite{RondogiannisW05} that $\aleq$  is a partial order, $\aleq[\alpha]$ is a preorder,
and $=_\alpha$ is an equivalence relation.
As in the case of positive programs, the minimum model of a program $\mathsf{P}$ coincides with the least fixed-point of an operator $T_\mathsf{P}$. 
This operator is defined through the notion of the ``least upper bound'' of a set of truth values. 
\begin{defi}
Let $S$ be a set with a partial order $\leq$ and let $A\subseteq S$. We say that $u\in S$ is an {\em upper bound} of $A$, 
if for every $v \in A$ we have $v\leq u$. Moreover, $u$ is called the {\em least upper bound} of $A$, if for every upper bound 
$u'$ of $A$ we have $u\leq u'$.
\end{defi}
If the least upper bound of $A$ exists then it is unique and we denote it by $lub(A)$. In~\cite{RondogiannisW05} it is shown that every subset of $\mathbb{V}$ has a least upper bound and the operator $T_\mathsf{P}$ can then be defined as below:
\begin{defi}
Let $\mathsf{P}$ be a propositional program and let $I$ be an interpretation of $\mathsf{P}$.
The {\em immediate consequence operator} $T_{\mathsf{P}}$ of $\mathsf{P}$ is defined as follows:
    $$T_\mathsf{P}(I)(\mathsf{p}) = lub(\{I(\mathsf{L}_1,\ldots,\mathsf{L}_n) \mid \mathsf{p}\leftarrow \mathsf{L}_1,\ldots,\mathsf{L}_n \in \mathsf{P}\})$$
%$T_\mathsf{P}$ is called the {\em immediate consequence operator} for $\mathsf{P}$.
\end{defi}

The least fixed-point $M_\mathsf{P}$ of $T_\mathsf{P}$ is constructed as follows. We start with $\emptyset$,
namely the interpretation that assigns to every propositional variable of $\mathsf{P}$ the value $F_0$. We iterate $T_\mathsf{P}$ on $\emptyset$
until the set of variables having a $F_0$ value and the set of variables having a $T_0$ value, stabilize. Then we reset the values of
all remaining variables to $F_1$. The procedure is repeated until the $F_1$ and $T_1$ values stabilize, and we reset the remaining variables
to $F_2$, and so on. It is shown in~\cite{RondogiannisW05} that there exists a countable ordinal $\delta$ for which this process will not produce any new variables having $F_\delta$ or $T_\delta$ values. At this point we reset all remaining variables to 0. The following definitions formalize this process.
\begin{defi}
Let $\mathsf{P}$ be a propositional program and let $I$ be an interpretation of $\mathsf{P}$. For each countable ordinal $\alpha$, we define the
interpretation $T_{\mathsf{P},\alpha}^{\omega}(I)$ as follows:
      \[
             T_{\mathsf{P},\alpha}^{\omega}(I)(\mathsf{p}) = \left\{
                    \begin{array}{ll}
                    I(\mathsf{p}), & \mbox{ if $order(I(\mathsf{p})) < \alpha$}\\
                    T_{\alpha},     & \mbox{ if $\mathsf{p}\in \bigcup_{n < \omega}(T_\mathsf{P}^n(I) \parallel T_{\alpha})$}\\
                    F_{\alpha},     & \mbox{ if $\mathsf{p}\in \bigcap_{n < \omega}(T_\mathsf{P}^n(I) \parallel F_{\alpha})$}\\
                    F_{\alpha+1},   & \mbox{ otherwise}
                    \end{array}
                \right.
      \]
\end{defi}
\begin{defi}\label{approx}
Let $\mathsf{P}$ be a propositional program. For each countable ordinal $\alpha$, we define
$M_\alpha = T^{\omega}_{\mathsf{P},\alpha}(\baseap{\alpha})$ where $ \baseap{0}=\emptyset$,
$\baseap{\alpha} = M_{\alpha-1}$ if $\alpha$ is a successor ordinal, and
\[\begin{array}{ccll}
   \baseap{\alpha}(\mathsf{p}) & = & \left\{
                    \begin{array}{ll}
                      M_{\beta}(\mathsf{p}),& \mbox{ if $order(M_{\beta}(\mathsf{p}))=\beta$ for some ${\beta < \alpha}$}\\
                      F_{\alpha},& \mbox{ otherwise}
                    \end{array}
                    \right.
\end{array}
\]
if $\alpha$ is a limit ordinal. The $M_0,M_1,\ldots,M_{\alpha},\ldots$ are called the {\em approximations} to the minimum model of $\mathsf{P}$.
\end{defi}
In~\cite{RondogiannisW05} it is shown that the above sequence of approximations is well-defined.
We will make use of the following lemma from~\cite{RondogiannisW05}:
\begin{lem}
\label{lm:TPn-and-TPomega}
Let $\mathsf{P}$ be a propositional program and let $\alpha$ be a countable ordinal. For all $n<\omega$, $T^n_\mathsf{P}(\baseap{\alpha})\aleq[\alpha]M_\alpha$.
%T^{\omega}_{\mathsf{P},\alpha}(\baseap{\alpha})$.
\end{lem}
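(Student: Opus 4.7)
The plan is to proceed by induction on $n$, using the four-case definition of $M_\alpha = T^\omega_{\mathsf{P},\alpha}(\baseap{\alpha})$ in conjunction with the three clauses of $\aleq[\alpha]$ (agreement at every $\beta<\alpha$, $T_\alpha$-inclusion, $F_\alpha$-reverse-inclusion).

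For the base case $n=0$ we just compare $\baseap{\alpha}$ with $M_\alpha$ directly. Case~1 of $T^\omega_{\mathsf{P},\alpha}$ guarantees $M_\alpha(\mathsf{p})=\baseap{\alpha}(\mathsf{p})$ whenever $order(\baseap{\alpha}(\mathsf{p}))<\alpha$, so $\baseap{\alpha}$ and $M_\alpha$ coincide on every $T_\beta$ and $F_\beta$ with $\beta<\alpha$. By inspection of Definition~\ref{approx}, $\baseap{\alpha}$ assigns only values of order $<\alpha$ or the value $F_\alpha$, so $\baseap{\alpha}\parallel T_\alpha=\emptyset$ and the $T_\alpha$-inclusion is trivial. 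Finally, any $\mathsf{p}$ with $M_\alpha(\mathsf{p})=F_\alpha$ can only arise from case~3 (cases~1 and~2 cannot produce $F_\alpha$ here, and case~4 yields $F_{\alpha+1}$), which forces $\mathsf{p}\in\bigcap_{m<\omega}(T^m_\mathsf{P}(\baseap{\alpha})\parallel F_\alpha)$, and in particular at $m=0$ we get $\baseap{\alpha}(\mathsf{p})=F_\alpha$. Hence $\baseap{\alpha}\aleq[\alpha]M_\alpha$.

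For the inductive step the principal obstacle is an auxiliary preservation property: for every $\beta<\alpha$ and every $n<\omega$, $T^n_\mathsf{P}(\baseap{\alpha})$ must agree with $\baseap{\alpha}$ on the variables of order $\beta$. I expect this to require an earlier transfinite-induction result from~\cite{RondogiannisW05} expressing that variables stabilized at previous stages $M_\beta$ are fixed by further applications of $T_\mathsf{P}$ once embedded into $\baseap{\alpha}$. Taking preservation for granted, the three components of $\aleq[\alpha]$ at step $n+1$ follow in a uniform way. At each level $\beta<\alpha$ the iterate $T^{n+1}_\mathsf{P}(\baseap{\alpha})$ retains exactly the values of $\baseap{\alpha}$, which by the base case already agree with $M_\alpha$, giving $=_\beta$. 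If $T^{n+1}_\mathsf{P}(\baseap{\alpha})(\mathsf{p})=T_\alpha$, then $\mathsf{p}\in\bigcup_{m<\omega}(T^m_\mathsf{P}(\baseap{\alpha})\parallel T_\alpha)$; preservation rules out case~1 of $T^\omega_{\mathsf{P},\alpha}$ (else $\baseap{\alpha}(\mathsf{p})=T_\alpha$ would have order $<\alpha$, impossible), so case~2 triggers and $M_\alpha(\mathsf{p})=T_\alpha$. Conversely, if $M_\alpha(\mathsf{p})=F_\alpha$, case~3 forces $T^m_\mathsf{P}(\baseap{\alpha})(\mathsf{p})=F_\alpha$ for every $m$, including $m=n+1$.

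The delicate step is therefore the preservation lemma at stages $<\alpha$: once the argument has it, everything else is mechanical unpacking of Definition~\ref{approx} and of $\aleq[\alpha]$. I would expect the preservation property to be proved by a separate outer transfinite induction on $\alpha$, exploiting that $\baseap{\alpha}$ is built from the $M_\beta$ for $\beta<\alpha$ and that each $M_\beta$ is stable under $T_\mathsf{P}$ in the appropriate sense.
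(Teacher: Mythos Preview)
The paper does not prove this lemma: it is quoted verbatim from~\cite{RondogiannisW05} and used as a black box. There is therefore no in-paper proof to compare your proposal against.

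That said, your sketch is essentially the argument given in~\cite{RondogiannisW05}. The base case is correct as you state it. For the inductive step you correctly isolate the one nontrivial ingredient, namely the preservation property that $T^n_\mathsf{P}(I_\alpha) =_\beta I_\alpha$ for every $\beta<\alpha$ and every $n$; in~\cite{RondogiannisW05} this is established by a separate lemma (proved by transfinite induction on $\alpha$, using that $I_\alpha$ inherits the stabilized values from the earlier $M_\beta$ and that $T_\mathsf{P}$ cannot move a variable whose value has already been fixed at a lower order). Once that is in hand, the three clauses of $\sqsubseteq_\alpha$ fall out exactly as you describe: agreement at levels $\beta<\alpha$ comes from preservation plus the base case; the $T_\alpha$-inclusion and $F_\alpha$-reverse-inclusion are read off directly from cases~2 and~3 of the definition of $T^\omega_{\mathsf{P},\alpha}$. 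One small wording issue: your parenthetical ``else $I_\alpha(\mathsf{p})=T_\alpha$ would have order $<\alpha$'' is garbled; what you mean is that if case~1 applied then preservation would force $T^{n+1}_\mathsf{P}(I_\alpha)(\mathsf{p})=I_\alpha(\mathsf{p})$ to have order $<\alpha$, contradicting the assumption that it equals $T_\alpha$.
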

The following lemma from~\cite{RondogiannisW05} states that there exists a certain countable ordinal, after which
new approximations do not introduce new truth values:
\begin{lem}\label{depthlemma}
Let $\mathsf{P}$ be a propositional program. Then, there exists a countable ordinal $\delta$, called the {\em depth} of $\mathsf{P}$,
such that:
\begin{enumerate}
\item for all countable ordinals $\gamma \geq \delta$,
$M_{\gamma} \parallel T_{\gamma} = \emptyset$ and $M_{\gamma} \parallel F_{\gamma} = \emptyset$;
\item for all $\beta < \delta$, $M_{\beta} \parallel T_{\beta} \neq \emptyset$ or $M_{\beta} \parallel F_{\beta} \neq \emptyset$.
\end{enumerate}
\end{lem}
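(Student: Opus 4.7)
The plan is a cardinality argument, coupled with a propagation claim to ensure that the stable-value-producing stages form an initial segment. For each countable ordinal $\beta$, set $S_\beta := (M_\beta \parallel T_\beta) \cup (M_\beta \parallel F_\beta)$, the set of variables first assigned a value of order $\beta$. I would first prove by induction on $\gamma \geq \beta$ that every $\mathsf{p} \in S_\beta$ satisfies $M_\gamma(\mathsf{p}) = M_\beta(\mathsf{p})$. For a successor $\gamma = \alpha+1$, the base interpretation $\baseap{\gamma} = M_\alpha$ still carries the value $M_\beta(\mathsf{p})$ by the inductive hypothesis; since this value has order $\beta < \gamma$, the first clause in the definition of $T^{\omega}_{\mathsf{P},\gamma}$ transports it unchanged. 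For limit $\gamma$, the clause defining $\baseap{\gamma}$ in Definition~\ref{approx} directly copies $M_\beta(\mathsf{p})$, and again the first clause of $T^{\omega}_{\mathsf{P},\gamma}$ preserves it. Hence the sets $S_\beta$ are pairwise disjoint.

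Since the set of propositional variables appearing in $\mathsf{P}$ is at most countable, $\bigcup_\beta S_\beta$ is at most countable, and so is the index set $A := \{\beta : S_\beta \neq \emptyset\}$. Any at-most-countable set of countable ordinals is bounded in $\omega_1$, so there exists a least countable ordinal $\delta$ for which $S_\gamma = \emptyset$ holds for every $\gamma \geq \delta$. Condition~(1) of the lemma follows immediately from this choice of $\delta$.

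The main obstacle is condition~(2): for this $\delta$ we also need $S_\beta \neq \emptyset$ at every $\beta < \delta$, which amounts to showing that $A$ is an initial segment of the countable ordinals. I would reduce this to the propagation claim that $S_\beta = \emptyset$ implies $S_\gamma = \emptyset$ for all $\gamma > \beta$. In the successor case $\gamma = \beta+1$, I would compare $T^n_\mathsf{P}(\baseap{\beta+1})$ with $T^n_\mathsf{P}(\baseap{\beta})$: under the hypothesis $S_\beta = \emptyset$, the two base interpretations differ only on the undecided variables, whose values shift uniformly from $F_\beta$ to $F_{\beta+1}$. A literal-by-literal analysis, supported by Lemma~\ref{lm:TPn-and-TPomega} (which relates the finite iterates to $M_\alpha$ in the $\aleq[\alpha]$ preorder), would track this shift through each application of $T_\mathsf{P}$ and show that any variable evading both the $\bigcup$- and $\bigcap$-conditions of $T^{\omega}_{\mathsf{P},\beta}$ must also evade the analogous conditions of $T^{\omega}_{\mathsf{P},\beta+1}$. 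Limit stages are handled via the clause for $\baseap{\lambda}$, which collapses the accumulated shifts into $F_\lambda$ on the undecided variables and reproduces the same pattern of failure. The careful bookkeeping of these order shifts across the transfinite iteration is where the technical delicacy lies.
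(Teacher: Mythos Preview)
The paper does not prove this lemma; it is quoted from \cite{RondogiannisW05} as background and no argument is given here. Your outline---pairwise disjointness of the $S_\beta$ via a persistence argument, then a cardinality bound in $\omega_1$ to secure condition~(1), followed by the propagation claim $S_\beta=\emptyset \Rightarrow S_\gamma=\emptyset$ for all $\gamma>\beta$ to obtain condition~(2)---is the standard route and is essentially how the result is established in \cite{RondogiannisW05}. Note that your persistence step is precisely the first assertion of Lemma~\ref{lm:approx_alpha_eq}, which the paper also imports from \cite{RondogiannisW05}; there is no circularity, since that assertion does not rely on the existence of $\delta$.

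One remark on the propagation step, which is where the real content lies. The ``uniform shift'' picture you describe needs some care: negated \emph{decided} atoms may already contribute values of order exactly $\beta$ (e.g.\ $\pnot\mathsf{q}$ with $\mathsf{q}$ of order $\beta-1$ in the successor case), and those values are \emph{not} shifted when passing from $\baseap{\beta}$ to $\baseap{\beta+1}$. What the literal-by-literal analysis actually delivers is that $T^n_\mathsf{P}(\baseap{\beta+1})$ and $T^n_\mathsf{P}(\baseap{\beta})$ agree on all values of order $<\beta$, and that on the undecided variables the former lands in $T_{\beta+1}$ (respectively $F_{\beta+1}$) exactly when the latter lands in $T_\beta$ (respectively $F_\beta$); this is enough to transfer both the $\bigcup$- and the $\bigcap$-condition. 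Your plan reaches this conclusion, so it is sound; just be aware that the bookkeeping you flag at the end is the substance of the propagation argument, not a residual technicality.
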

Given a propositional program $\mathsf{P}$ that has depth $\delta$, we define the
following interpretation $M_\mathsf{P}$:
\[
       M_\mathsf{P}(\mathsf{p}) = \left\{
                    \begin{array}{ll}
                      M_{\delta}(\mathsf{p}),& \mbox{if $order(M_{\delta}(\mathsf{p}))<\delta$}\\
                      0, & \mbox{otherwise}
                    \end{array}
                  \right.
\]
The following two theorems from~\cite{RondogiannisW05}, establish interesting properties of $M_\mathsf{P}$:
\begin{thm}\label{minimum}
The infinite-valued interpretation $M_\mathsf{P}$ is a model of $\mathsf{P}$. Moreover, it is
the least (with respect to $\aleq$) among all the infinite-valued models of $\mathsf{P}$.
\end{thm}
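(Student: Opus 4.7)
The plan is to split the theorem into two independent arguments: first that $M_\mathsf{P}$ satisfies every clause of $\mathsf{P}$, and second that any model $N$ dominates $M_\mathsf{P}$ in the $\aleq$ ordering. For the model property, I would strengthen the goal and show that $M_\mathsf{P}$ is a fixed point of $T_\mathsf{P}$; this implies it is a model because $T_\mathsf{P}$ computes each head's value as the least upper bound over body values, so clause satisfaction $M_\mathsf{P}(\mathsf{p}) \geq M_\mathsf{P}(\mathsf{L}_1,\ldots,\mathsf{L}_n)$ follows from the fixed-point equation $M_\mathsf{P}(\mathsf{p}) = lub\{M_\mathsf{P}(\mathsf{L}_1,\ldots,\mathsf{L}_n) : \mathsf{p} \leftarrow \mathsf{L}_1,\ldots,\mathsf{L}_n \in \mathsf{P}\}$.

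The fixed-point claim reduces to a case analysis on $M_\mathsf{P}(\mathsf{p})$. When $order(M_\mathsf{P}(\mathsf{p})) = \alpha < \delta$, I would use that $M_\mathsf{P}(\mathsf{p}) = M_\alpha(\mathsf{p})$ by construction, and that for each atom $\mathsf{q}$ appearing in a body its value under $M_\mathsf{P}$ already coincides with the value assigned at stage $\alpha$ (atoms with $order < \alpha$ have been frozen; atoms still at order $\geq \alpha$ contribute their level-$\alpha$ snapshot). Lemma~\ref{lm:TPn-and-TPomega} together with the definition of $T^\omega_{\mathsf{P},\alpha}$ then yield the needed fixed-point equation at level $\alpha$. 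When $M_\mathsf{P}(\mathsf{p}) = 0$, I would argue that $lub\{M_\mathsf{P}(body) : \mathsf{p} \leftarrow body \in \mathsf{P}\}$ cannot strictly exceed $0$ (else $\mathsf{p}$ would have been captured by $\bigcup_n T_\mathsf{P}^n(\baseap{\alpha}) \parallel T_\alpha$ at some stage $\alpha < \delta$) and cannot strictly fall below $0$ (else the body values would stabilize at some $F_\beta$, forcing $\mathsf{p}$ into $\bigcap_n T_\mathsf{P}^n(\baseap{\beta}) \parallel F_\beta$).

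For minimality, fix any model $N$ of $\mathsf{P}$ and prove by transfinite induction on $\alpha$ that $M_\alpha \aleq[\alpha] N$: that is, $M_\alpha$ and $N$ agree on every $T_\beta$- and $F_\beta$-class for $\beta < \alpha$, while $M_\alpha \parallel T_\alpha \subseteq N \parallel T_\alpha$ and $M_\alpha \parallel F_\alpha \supseteq N \parallel F_\alpha$. In the successor step I would start from $\baseap{\alpha} = M_{\alpha-1}$ and argue that repeated applications of $T_\mathsf{P}$ cannot push an atom into $T_\alpha$ unless some clause body is forced to $T_\alpha$ in $N$ as well (using the model condition $N(\mathsf{p}) \geq N(body)$ together with the induction hypothesis on atoms appearing in $body$); symmetric reasoning handles the $F_\alpha$-class. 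The limit step follows from the pointwise definition of $\baseap{\alpha}$ and the fact that the $M_\beta$ for $\beta < \alpha$ already agree with $N$ on every atom whose order has been fixed before $\alpha$. Passing from $M_\delta$ to $M_\mathsf{P}$ adds only a collapse of residual atoms to $0$, and so $M_\mathsf{P} \aleq N$ follows.

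The main obstacle will be handling atoms assigned the value $0$ by $M_\mathsf{P}$ in the minimality proof, in tandem with the ``reset to $F_{\alpha+1}$'' operation appearing in $T^\omega_{\mathsf{P},\alpha}$. Concretely, I must rule out the possibility that $N$ disagrees with the stable limit of the Kleene iteration above $\baseap{\alpha}$ in a way that breaks the required containment of $T_\beta$- or $F_\beta$-classes; this will require Lemma~\ref{depthlemma} to bound the construction at $\delta$ and a careful tracking of how $N$'s values interact with the finite iterates $T_\mathsf{P}^n(\baseap{\alpha})$, ultimately justifying both the ``$\subseteq$'' for $T_\alpha$-atoms (newly generated true atoms are forced in any model) and the ``$\supseteq$'' for $F_\alpha$-atoms (atoms never reached by an infinite supporting chain may safely be declared $F_\alpha$).
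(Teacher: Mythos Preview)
The paper does not actually prove this theorem; it is quoted as background from \cite{RondogiannisW05} (see the sentence ``The following two theorems from~\cite{RondogiannisW05}, establish interesting properties of $M_\mathsf{P}$''). So there is no in-paper argument to compare against, only the original proof in \cite{RondogiannisW05}. Your fixed-point strategy for the model part is the right shape and matches what that reference does.

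Your minimality argument, however, has a genuine gap: the inductive hypothesis ``$M_\alpha \aleq[\alpha] N$ for every model $N$'' is false for $\alpha \geq 1$. By definition, $M_\alpha \aleq[\alpha] N$ requires $M_\alpha =_\beta N$ for every $\beta < \alpha$, in particular $M_\alpha \parallel T_0 = N \parallel T_0$. But an arbitrary model $N$ may assign $T_0$ to atoms that $M_\mathsf{P}$ assigns $F_0$ (e.g.\ take $\mathsf{P}$ with a single variable $\mathsf{p}$ and no clauses: $M_\mathsf{P}(\mathsf{p}) = F_0$, yet $N(\mathsf{p}) = T_0$ is a model). Hence $M_1 =_0 N$ already fails and your induction cannot get past $\alpha = 0$.

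The repair, and what \cite{RondogiannisW05} does, is not to prove the universal statement $M_\alpha \aleq[\alpha] N$ but to argue directly about the \emph{first} level of disagreement: assume $N \neq M_\mathsf{P}$, let $\alpha$ be least with $M_\mathsf{P} \parallel T_\alpha \neq N \parallel T_\alpha$ or $M_\mathsf{P} \parallel F_\alpha \neq N \parallel F_\alpha$, and then show the two containments $M_\mathsf{P} \parallel T_\alpha \subseteq N \parallel T_\alpha$ and $M_\mathsf{P} \parallel F_\alpha \supseteq N \parallel F_\alpha$ using the assumed equality at all lower levels. The point is that the equalities $=_\beta$ for $\beta < \alpha$ are \emph{hypotheses} coming from the choice of $\alpha$, not conclusions to be established inductively; your sketch inverts this and thereby asks for something that is simply not true of arbitrary models.
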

\begin{thm}\label{collapses}
The interpretation $N_\mathsf{P}$ obtained by collapsing all true values of $M_\mathsf{P}$
to {\em True} and all false values to {\em False}, coincides with the well-founded model of $\mathsf{P}$.
\end{thm}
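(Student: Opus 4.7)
The plan is to match the stage-wise construction of $M_\mathsf{P}$ given by Definition~\ref{approx} with a standard iterative description of the well-founded model, for instance the Van Gelder alternating fixed-point. In that formulation one builds a chain of pairs $(U_\alpha, V_\alpha)$ of atoms, with $U_\alpha$ the ones known true and $V_\alpha$ the ones still potentially true at stage $\alpha$: $U_{\alpha+1}$ is obtained by evaluating $T_\mathsf{P}$ while treating the atoms outside $V_\alpha$ as false (and $V_{\alpha+1}$ symmetrically), and the limit yields the well-founded true, false and undefined atoms. My goal is to show, by transfinite induction on $\alpha$, that the sets $M_\alpha \parallel T_\alpha$ and $M_\alpha \parallel F_\alpha$ agree, respectively, with the atoms newly classified as true at stage $\alpha$ of the alternating fixed-point and with those newly classified as false.

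For the base case I would unfold $T_{\mathsf{P},0}^{\omega}(\emptyset)$: starting from the interpretation that sends every atom to $F_0$, a negative literal evaluates to $T_1$ and so cannot make a body reach the value $T_0$; consequently the atoms that end up with $T_0$ after $\omega$ iterations are precisely those derivable by purely positive Horn clauses, i.e.\ the first stage of the alternating construction. Symmetrically, the atoms that remain $F_0$ throughout are those with no derivation even under the most optimistic assumption about negations, matching the atoms outside the initial $V_0$. For successor ordinals, the reset that sets undetermined atoms to $F_\alpha$ in $\baseap{\alpha}$ plays the role of pretending that the previously undecided atoms are false when applying the positive consequence operator once more; the subsequent iteration of $T_\mathsf{P}$ to the $\omega$-th power realises the corresponding least fixed-point and produces exactly the next layer of true and false atoms, with Lemma~\ref{lm:TPn-and-TPomega} ensuring that no stable classification below level $\alpha$ is disturbed.

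The limit-ordinal case is where I expect the work to concentrate. Here one has to argue that the piecewise definition of $\baseap{\alpha}$ -- carrying over each atom's value as soon as it has been classified at some $\beta < \alpha$, and otherwise resetting it to $F_\alpha$ -- is exactly the continuity behaviour one needs so that $M_\alpha$ matches the corresponding limit pair $(U_\alpha, V_\alpha)$. Once this correspondence has been established up to the depth $\delta$ of $\mathsf{P}$ given by Lemma~\ref{depthlemma}, the final collapse is immediate: atoms with value $T_\beta$ or $F_\beta$ for some $\beta < \delta$ map to \emph{True} or \emph{False} in both $N_\mathsf{P}$ and the well-founded model, while the atoms left at value $0$ are exactly those that never leave the gap between $U$ and $V$ and are therefore well-founded undefined. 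The main obstacle is keeping the two stage counters precisely synchronised across limits; provided they match at every successor and the limit construction of $\baseap{\alpha}$ is reconciled with the union at limit stages of the alternating fixed-point, the transfinite chain is forced to agree overall.
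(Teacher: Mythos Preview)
The paper does not prove this theorem at all: it is quoted verbatim as a background result from~\cite{RondogiannisW05}, introduced by the sentence ``The following two theorems from~\cite{RondogiannisW05}, establish interesting properties of $M_\mathsf{P}$.'' There is therefore no proof in the present paper against which to compare your proposal; the authors simply import the statement and use it (e.g., in the proof of the lemma about locally stratified programs in Section~\ref{stratification_section}).

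Your sketch is a plausible outline of how one would establish the result---synchronising the approximations $M_\alpha$ with the stages of an alternating fixed-point construction of the well-founded model is indeed the natural strategy, and is essentially what is carried out in~\cite{RondogiannisW05}. But since the paper under review offers no argument of its own, the comparison task is vacuous here: any assessment of your proof would have to be made against the original source rather than against this paper.
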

The next lemma states a fact already implied earlier, namely that new approximations do not affect the sets of variables stabilized by the preceding ones.
\begin{lem}
\label{lm:approx_alpha_eq}
Let $\mathsf{P}$ be a propositional program and let $\alpha$ be a countable ordinal. For all countable ordinals $\beta>\alpha$, $M_\alpha=_\alpha M_\beta$. Moreover, $M_\alpha=_\alpha M_\mathsf{P}$.
\end{lem}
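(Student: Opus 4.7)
The plan is to prove the first claim by transfinite induction on $\beta$, establishing the stronger statement: for every countable ordinal $\beta$ and every $\alpha<\beta$, $M_\alpha =_\alpha M_\beta$. The second claim about $M_\mathsf{P}$ will then follow easily from the first combined with Lemma~\ref{depthlemma} and the definition of $M_\mathsf{P}$.

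The central technical observation driving every inductive step is that $T^\omega_{\mathsf{P},\beta}$ is conservative on low-order entries: inspecting the four cases in the definition, $T^\omega_{\mathsf{P},\beta}(I)(\mathsf{p})$ equals $I(\mathsf{p})$ whenever $order(I(\mathsf{p}))<\beta$, and otherwise takes a value of order at least $\beta$. So for every interpretation $I$ and every $\gamma<\beta$ one has $T^\omega_{\mathsf{P},\beta}(I)\parallel T_\gamma = I\parallel T_\gamma$ and similarly for $F_\gamma$. For the successor case $\beta=\gamma+1$, I apply this to $I=M_\gamma$ to obtain $M_\gamma =_\gamma M_{\gamma+1}$; combined with the inductive hypothesis $M_\alpha =_\alpha M_\gamma$ (for $\alpha<\gamma$) and the transitivity of $=_\alpha$, this yields $M_\alpha=_\alpha M_{\gamma+1}$.

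The limit case is the delicate step. Here I first show that $M_\alpha =_\alpha \baseap{\beta}$ for every $\alpha<\beta$. Given $\mathsf{p}$ with $M_\alpha(\mathsf{p})$ of order $\delta\leq\alpha$, the inductive hypothesis applied at $\gamma=\alpha,\alpha'=\delta$ (or trivially when $\delta=\alpha$) gives $M_\delta(\mathsf{p})=M_\alpha(\mathsf{p})$, so $order(M_\delta(\mathsf{p}))=\delta<\beta$ and hence $\baseap{\beta}(\mathsf{p})=M_\delta(\mathsf{p})=M_\alpha(\mathsf{p})$; conversely, if $\baseap{\beta}(\mathsf{p})$ has order $\leq\alpha$ it cannot be $F_\beta$, so it equals $M_\gamma(\mathsf{p})$ for the unique witnessing $\gamma\leq\alpha$, and the inductive hypothesis again gives $M_\gamma(\mathsf{p})=M_\alpha(\mathsf{p})$. (Uniqueness of the witness $\gamma$ follows from the inductive hypothesis itself, since two distinct witnesses would force incompatible orders.) Applying the conservativity of $T^\omega_{\mathsf{P},\beta}$ to $\baseap{\beta}$ then propagates this equality through to $M_\beta$, establishing $M_\alpha=_\alpha M_\beta$.

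For the second claim, let $\delta$ be the depth of $\mathsf{P}$. If $\alpha<\delta$, then $M_\alpha=_\alpha M_\delta$ by the first part, and since $M_\mathsf{P}$ agrees with $M_\delta$ on every variable of order $<\delta$, we immediately get $M_\alpha=_\alpha M_\mathsf{P}$ by restricting to orders $\leq\alpha<\delta$. If $\alpha\geq\delta$, then for $\gamma<\delta$ the first part again yields $M_\alpha\parallel T_\gamma = M_\delta\parallel T_\gamma = M_\mathsf{P}\parallel T_\gamma$ (and similarly for $F_\gamma$), while for $\delta\leq\gamma\leq\alpha$ Lemma~\ref{depthlemma} together with the first part gives $M_\alpha\parallel T_\gamma = M_\gamma\parallel T_\gamma = \emptyset = M_\mathsf{P}\parallel T_\gamma$ (since $M_\mathsf{P}$ only takes values of order $<\delta$ or the value $0$). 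The main obstacle I anticipate is the bookkeeping in the limit case, where one must simultaneously track conservativity of $T^\omega_{\mathsf{P},\beta}$, the case analysis in the definition of $\baseap{\beta}$, and the well-definedness of the witnessing ordinal $\gamma$ inside that definition.
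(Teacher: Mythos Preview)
The paper does not supply a proof of this lemma; it is stated without argument as background material on the infinite-valued semantics (the surrounding lemmas are attributed to~\cite{RondogiannisW05}, and this one is introduced only as ``a fact already implied earlier''). There is therefore no paper proof to compare against.

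Your proposal is correct and is the natural argument. The conservativity observation about $T^\omega_{\mathsf{P},\beta}$---that it leaves entries of order $<\beta$ untouched and sends everything else to order $\geq\beta$---is exactly the engine one needs, and your transfinite induction on $\beta$ uses it cleanly in both the successor and limit steps. The limit-case bookkeeping you flagged as the main obstacle is handled correctly: the well-definedness of the witnessing $\gamma$ in the definition of $\baseap{\beta}$ does follow from the inductive hypothesis (two distinct witnesses $\gamma_1<\gamma_2$ would force $M_{\gamma_2}(\mathsf{p})$ to have order $\gamma_1$ via $M_{\gamma_1}=_{\gamma_1}M_{\gamma_2}$, contradicting $order(M_{\gamma_2}(\mathsf{p}))=\gamma_2$), and the two directions of $M_\alpha=_\alpha \baseap{\beta}$ are argued accurately. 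The passage to $M_\mathsf{P}$ via the depth $\delta$ and Lemma~\ref{depthlemma} is likewise sound; in the range $\delta\leq\gamma\leq\alpha$ you correctly use that $M_\mathsf{P}$ assigns only values of order $<\delta$ or the value $0$, so $M_\mathsf{P}\parallel T_\gamma=M_\mathsf{P}\parallel F_\gamma=\emptyset$ there. One cosmetic remark: in the successor step, the case $\alpha=\gamma$ (i.e.\ $\alpha=\beta-1$) is covered directly by $M_\gamma=_\gamma M_{\gamma+1}$ rather than by the inductive hypothesis, but this is implicit in what you wrote.
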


\section{The Syntax of \texorpdfstring{$\mathcal{H}$}{H}}\label{syntax_of_language}
In this section we define the syntax of the language $\mathcal{H}$ that we use throughout the paper.
$\mathcal{H}$ is based on a simple type system with two base types: $o$, the boolean domain, and $\iota$, the domain of data objects.
The composite types are partitioned into three classes: functional (assigned to function symbols), predicate (assigned to predicate symbols)
and argument (assigned to parameters of predicates).
\begin{defi}
A type can either be \emph{functional}, \emph{predicate}, or \emph{argument}, denoted by $\sigma$, $\pi$
and $\rho$ respectively and defined as:
\begin{align*}
\sigma & := \iota \mid (\iota \rightarrow \sigma) \\
\pi & := o \mid (\rho \rightarrow \pi) \\
\rho & := \iota \mid \pi
\end{align*}
\end{defi}

We will use $\tau$ to denote an arbitrary type (either functional, predicate, or argument). As usual, the binary operator $\rightarrow$ is right-associative. A functional type that is different than $\iota$ will often be written in the form $\iota^n \rightarrow \iota$, $n\geq 1$. Moreover, it can be easily seen that every predicate type $\pi$ can be written in the form $\rho_1 \rightarrow \cdots \rightarrow \rho_n \rightarrow o$, $n\geq 0$ (for $n=0$ we assume that $\pi=o$).  We proceed by defining the syntax of $\mathcal{H}$:
\begin{defi}
The \emph{alphabet} of $\mathcal{H}$ consists of the following:
\begin{enumerate}
  \item Predicate variables of every predicate type $\pi$ (denoted by capital letters such as
      $\mathsf{Q,R,}$ $\mathsf{S,\ldots}$).
  \item Individual variables of type $\iota$ (denoted by capital letters such as
      $\mathsf{X,Y,Z,\ldots}$).
  \item Predicate constants of every predicate type $\pi$ (denoted by lowercase letters such as $\mathsf{p,q,r,\ldots}$).
  \item Individual constants of type $\iota$ (denoted by lowercase
      letters such as $\mathsf{a,b,c,\ldots}$).
  \item Function symbols of every functional type $\sigma \neq \iota$ (denoted by lowercase letters such as $\mathsf{f,g,h,\ldots}$).
  \item The inverse implication constant $\leftarrow$, the negation constant $\pnot$, the comma, the left and right parentheses, and the equality constant $\approx$ for comparing terms of type $\iota$.
\end{enumerate}
\end{defi}
Arbitrary variables will be usually denoted by $\mathsf{V}$ and its subscripted versions.

\begin{defi}
The set of {\em terms} of $\mathcal{H}$ is defined as follows:
\begin{itemize}
  \item Every predicate variable (respectively, predicate constant) of type $\pi$ is a
        term of type $\pi$; every individual variable (respectively, individual constant)
        of type $\iota$ is a term of type $\iota$;
  \item if $\mathsf{f}$ is an $n$-ary function symbol and $\mathsf{E}_1, \ldots, \mathsf{E}_n$
        are terms of type $\iota$ then $(\mathsf{f}\ \mathsf{E}_1\cdots\mathsf{E}_n)$ is
        a term of type $\iota$;
  \item if $\mathsf{E}_1$ is a term of type $\rho \rightarrow \pi$ and
        $\mathsf{E}_2$ a term of type $\rho$ then $(\mathsf{E}_1\ \mathsf{E}_2)$ is a term of type $\pi$.
\end{itemize}
\end{defi}
\begin{defi}
\label{def:expressions}
The set of {\em expressions} of $\mathcal{H}$ is defined as follows:
\begin{itemize}
\item A term of type $\rho$ is an expression of type $\rho$;
\item if $\mathsf{E}$ is a term of type $o$ then $(\pnot \mathsf{E})$ is an expression of type $o$;
\item if $\mathsf{E}_1$ and $\mathsf{E}_2$ are terms of type $\iota$, then $(\mathsf{E}_1\approx \mathsf{E}_2)$ is an expression of type $o$.
\end{itemize}
\end{defi}
We write $vars(\mathsf{E})$ to denote the set of all the variables in $\mathsf{E}$.
Expressions (respectively, terms) that have no variables will often be referred to as {\em ground expressions} (respectively, {\em ground terms}). We will omit parentheses
when no confusion arises. To denote that an expression $\mathsf{E}$ has type $\rho$ we will often write $\mathsf{E}:\rho$.
Terms of type $o$ will often be referred to as {\em atoms}. Expressions of type $o$ that do not contain negation, i.e. atoms and expressions of the form $(\mathsf{E}_1\approx \mathsf{E}_2)$, will be called {\em positive literals}, while expressions of the form $(\pnot \mathsf{E})$ will be called {\em negative literals}. A {\em literal} is either a positive literal or a negative literal.

\begin{defi}\label{programs}
A {\em clause} of $\mathcal{H}$ is a formula
$\mathsf{p}\ \mathsf{V}_1 \cdots \mathsf{V}_n \leftarrow \mathsf{L}_1, \ldots, \mathsf{L}_m$,
where $\mathsf{p}$ is a predicate constant of type $\rho_1 \rightarrow \cdots \rightarrow\rho_n \rightarrow o$, $\mathsf{V}_1,\ldots,\mathsf{V}_n$ are distinct variables of types $\rho_1,\ldots,\rho_n$ respectively and $\mathsf{L}_1,\ldots,\mathsf{L}_m$ are literals. The term $\mathsf{p}\ \mathsf{V}_1 \cdots \mathsf{V}_n$ is called the {\em head} of the clause, the variables $\mathsf{V}_1, \ldots, \mathsf{V}_n$ are the {\em formal parameters} of the
clause
%, the variables $\mathsf{V}_1, \ldots, \mathsf{V}_n$ are the {\em formal parameters} of the clause
and the conjunction $\mathsf{L}_1,\ldots, \mathsf{L}_m$ is its {\em body}.
A {\em program} $\mathsf{P}$ of $\mathcal{H}$ is a finite set of clauses.
\end{defi}
\begin{exa}
The program below defines the {\tt subset} relation over unary predicates:
\[
\begin{array}{l}
\mbox{\tt subset S1 S2 :- $\pnot$(nonsubset S1 S2).}\\
\mbox{\tt nonsubset S1 S2 :- S1 X, $\pnot$(S2 X).}
\end{array}
\]
The variables {\tt S1} and {\tt S2} are both predicate variables of type $\iota\rightarrow o$, while {\tt X} is an individual variable (i.e., it is of type $\iota$). Given predicates {\tt p} and {\tt q} of type $\iota\rightarrow o$, {\tt subset p q} is true if {\tt p} is a subset of {\tt q}.\qed
\end{exa}
In the following, we will often talk about the ``ground instantiation of a program''. This
notion is formally defined below.
\begin{defi}\label{def:ground_sub}
A {\em substitution} $\theta$ is a finite set of the form $\{ \mathsf{V}_1/\mathsf{E}_1, \ldots, \mathsf{V}_n/\mathsf{E}_n\}$
where the $\mathsf{V}_i$'s are different variables and each $\mathsf{E}_i$
is a term having the same type as $\mathsf{V}_i$. We write
$dom(\theta)$ to denote the domain $\{ \mathsf{V}_1, \ldots, \mathsf{V}_n\}$ of $\theta$. If
all the expressions $\mathsf{E}_1, \ldots, \mathsf{E}_n$ are ground terms, $\theta$ is called a {\em ground substitution}.
\end{defi}
We can now define the application of a substitution to an expression.
\begin{defi}
\label{def:ground_inst}
Let $\theta$ be a substitution and $\mathsf{E}$ be an expression. Then, $\mathsf{E}\theta$
is an expression obtained from $\mathsf{E}$ as follows:
\begin{itemize}
  \item $\mathsf{E}\theta = \mathsf{E}$ if $\mathsf{E}$ is a predicate constant or individual constant;
  \item $\mathsf{V}\theta = \theta(\mathsf{V})$ if $\mathsf{V} \in dom(\theta)$; otherwise, $\mathsf{V}\theta = \mathsf{V}$;
  \item $(\mathsf{f}\ \mathsf{E}_1\cdots\mathsf{E}_n)\theta = (\mathsf{f}\ \mathsf{E}_1\theta\cdots\mathsf{E}_n\theta)$;
  \item $(\mathsf{E}_1\ \mathsf{E}_2)\theta = (\mathsf{E}_1\theta\ \mathsf{E}_2\theta)$;
  \item $(\pnot \mathsf{E})\theta= (\pnot \mathsf{E}\theta)$;
  \item $(\mathsf{E}_1\approx \mathsf{E}_2)\theta = (\mathsf{E}_1\theta\approx \mathsf{E}_2\theta)$.
\end{itemize}
If $\theta$ is a ground substitution such that $vars(\mathsf{E}) \subseteq dom(\theta)$, then
the ground expression $\mathsf{E}\theta$ is called a {\em ground instance} of $\mathsf{E}$.
\end{defi}
\begin{defi}\label{ground_instantiation_definition}
Let $\mathsf{P}$ be a program. A {\em ground instance of a clause}
$\mathsf{p}\ \mathsf{V}_1 \cdots \mathsf{V}_n \leftarrow \mathsf{L}_1,\ldots,\mathsf{L}_m$
of $\mathsf{P}$ is a formula  $(\mathsf{p}\ \mathsf{V}_1 \cdots \mathsf{V}_n)\theta \leftarrow \mathsf{L}_1\theta,\ldots,\mathsf{L}_m\theta$, where $\theta$ is a ground substitution whose domain
is the set of all variables that appear in the clause, such that for every $\mathsf{V} \in dom(\theta)$ with $\mathsf{V}:\rho$,
$\theta(\mathsf{V})$ is a ground expression of type $\rho$ that has been formed with predicate constants, function symbols,
and individual constants that appear in $\mathsf{P}$. The {\em ground instantiation of a program} $\mathsf{P}$,
denoted by $\mathsf{Gr(P)}$, is the (possibly infinite) set that contains all the ground instances of the
clauses of $\mathsf{P}$.
\end{defi}
In the rest of the paper, we will be extensively using the notion of ground instantiation.
Notice that in the body of a clause of the ground instantiation, there may exist ground
expressions of the form $(\mathsf{E}_1\approx \mathsf{E}_2)$. In the case where the two
expressions $\mathsf{E}_1$ and $\mathsf{E}_2$ are syntactically identical, the expression
$(\mathsf{E}_1\approx \mathsf{E}_2)$ will be treated as the constant $\mathsf{true}$, and
otherwise as the constant $\mathsf{false}$.

\section{The Semantics of \texorpdfstring{$\mathcal{H}$}{H}}\label{semantics_of_language}
In this section we develop the semantics of $\mathcal{H}$. Our developments generalize the semantics of~\cite{Bezem99,Bezem01}
for positive higher-order logic programs, to programs with negation. Notice that the semantics of~\cite{Bezem99,Bezem01} is
based on classical two-valued logic, while ours on the infinite-valued logic of Section~\ref{infinite_valued}.

In order to interpret the programs of  $\mathcal{H}$, we need to specify the semantic domains in which the expressions of each type $\tau$ are assigned their meanings. We adopt the approach of~\cite{Bezem99,Bezem01}. More specifically, the following definition implies that the expressions of predicate types should be understood as representing functions. We use $[S_1 \rightarrow S_2]$ to denote the set of (possibly partial) functions from a set $S_1$ to a set $S_2$. The possibility to have a partial function arises due to a technicality
which is explained in the remark just above Definition~\ref{HInterp}.
\begin{defi}
\label{def:type_structure}
A \emph{functional type structure} $\mathcal{S}$ for $\mathcal{H}$ consists of two non-empty sets $D$ and $A$ together with an assignment
$\mo{\tau}$ to each type $\tau$ of $\mathcal{H}$, so that the following are satisfied:

\begin{itemize}
\item $\mo{\iota} = D$;
\item $\mo{\iota^n \rightarrow \iota} = D^n \rightarrow D$;
\item $\mo{o} = A$;
\item $\mo{\rho \rightarrow \pi} \subseteq [\mo{\rho} \rightarrow \mo{\pi}]$.
\end{itemize}
%We will refer to $D$ as the \emph{domain of individuals} of $\mathcal{S}$ and to $A$ as the \emph{domain of atoms} of $\mathcal{S}$.
\end{defi}

Given a functional type structure $\mathcal{S}$, any function $\mathit{val}:\mo{o}\rightarrow \mathbb{V}$ will be called an  \emph{infinite-valued valuation function} (or simply \emph{valuation function}) for $\mathcal{S}$. %Valuation functions will be used to assign truth values to formulas of the language.

It is customary in the study of the semantics of logic programming languages to restrict attention to
{\em Herbrand interpretations}. Given a program $\mathsf{P}$, a Hebrand interpretation is one that
has as its underlying universe the so-called {\em Herbrand universe} of $\mathsf{P}$:
\begin{defi}\label{HUniverse}
For a program $\mathsf{P}$, we define the \emph{Herbrand universe} for every argument type $\rho$, denoted by
$U_{\mathsf{P},\rho}$, to be the set of all ground terms of type $\rho$ that can be formed out of the individual constants, function symbols, and predicate constants in the program. Moreover, we define $U^+_{\mathsf{P},o}$ to be the set of all ground expressions of type $o$, that can be formed out of the above symbols, i.e. the set $U^+_{\mathsf{P},o}=U_{\mathsf{P},o} \cup \{(\mathsf{E}_1\approx \mathsf{E}_2)\mid\mathsf{E}_1, \mathsf{E}_2\in U_{\mathsf{P},\iota}\} \cup \{(\pnot \mathsf{E})\mid\mathsf{E}\in U_{\mathsf{P},o}\}$.
\end{defi}

Following~\cite{Bezem99,Bezem01}, we take $D$ and $A$ in Definition \ref{def:type_structure} to be equal to $U_{\mathsf{P},\iota}$ and $U^+_{\mathsf{P},o}$ respectively. Then, each element of $U_{\mathsf{P}, \rho \rightarrow \pi}$ can itself be perceived as a function mapping elements of $\mo{\rho}$ to elements of $\mo{\pi}$, through syntactic application mapping. That is, $\mathsf{E}\in U_{\mathsf{P},\rho \rightarrow \pi}$ can be viewed as the function mapping each term $\mathsf{E}' \in  U_{\mathsf{P},\rho}$ to the term $\mathsf{E} \, \mathsf{E}' \in  U_{\mathsf{P},\pi}$. Similarly, every $n$-ary function symbol $\mathsf{f}$ appearing in $\mathsf{P}$ can be viewed as the function mapping each element $(\mathsf{E}_1, \ldots,\mathsf{E}_n) \in U_{\mathsf{P},\iota}^n$ to the term $(\mathsf{f} \, \mathsf{E}_1\,\cdots\,\mathsf{E}_n) \in U_{\mathsf{P},\iota}$.

\vspace{0.2cm}
\noindent
{\bf Remark:} There is a small technicality here which we need to clarify. In the case where $\rho=o$,
$\mathsf{E}\in U_{\mathsf{P},o \rightarrow \pi}$ is a partial function because it maps elements of
$U_{\mathsf{P},o}$ (and not of $U^+_{\mathsf{P},o}$) to elements of $U_{\mathsf{P},\pi}$; this is due
to the fact that our syntax does not allow an expression of type $o \rightarrow \pi$ to take as argument
an expression of the form $(\mathsf{E}_1\approx \mathsf{E}_2)$ nor of the form $(\pnot \mathsf{E})$.
In all other cases (i.e., when $\rho\neq o$), $\mathsf{E}$ represents a total function.

\begin{defi}\label{HInterp}
A \emph{Herbrand interpretation} $I$ of a program $\mathsf{P}$ consists of
\begin{enumerate}
\item a functional type structure $\mathcal{S}_I$, such that $D= U_{\mathsf{P},\iota}$, $A=U^+_{\mathsf{P},o}$ and $\mo{\rho \rightarrow \pi} =U_{\mathsf{P},\rho \rightarrow \pi}$ for every predicate type $\rho \rightarrow \pi$;
\item an assignment to each individual constant $\mathsf{c}$ in $\mathsf{P}$, of the element $I(\mathsf{c}) = \mathsf{c}$; to each predicate constant $\mathsf{p}$ in $\mathsf{P}$, of the element $I(\mathsf{p}) =\mathsf{p}$; to each function symbol $\mathsf{f}$ in $\mathsf{P}$, of the element $I(\mathsf{f}) = \mathsf{f}$;
\item a valuation function $\vwrt{\cdot}{I}$ for $\mathcal{S}_I$,  assigning to each element of $U^+_{\mathsf{P},o}$ an element in  $\mathbb{V}$, while satisfying the following:
	\begin{itemize}
	\item for all $\mathsf{E}_1,\mathsf{E}_2\in U_{\mathsf{P},\iota}$, $\vwrt{(\mathsf{E}_1\approx\mathsf{E}_2)}{I}= \begin{cases}
F_{0}, & \mbox{if } \mathsf{E}_1\neq \mathsf{E}_2\\
T_{0}, & \mbox{if } \mathsf{E}_1= \mathsf{E}_2
\end{cases}$;
	\item for all $\mathsf{E}\in U_{\mathsf{P},o}$, $\vwrt{(\mathsf{\pnot E)}}{I} = \begin{cases}
T_{\alpha+1}, & \mbox{if } \vwrt{\mathsf{E}}{I}=F_\alpha\\
F_{\alpha+1}, & \mbox{if } \vwrt{\mathsf{E}}{I}=T_\alpha\\
\unk, & \mbox{if } \vwrt{\mathsf{E}}{I}=\unk
\end{cases}$.
	\end{itemize}
\end{enumerate}
\end{defi}
We call $\vwrt{\cdot}{I}$ the {\em valuation function of $I$} and omit the reference to $\mathcal{S}_I$, since the latter is common to all Herbrand interpretations of a program. In fact, individual Herbrand interpretations are only set apart by their valuation functions.

\begin{defi}\label{def:HState}
  A \emph{Herbrand state} (or simply \emph{state}) $s$ of a program $\mathsf{P}$ is a function that assigns to each variable $\mathsf{V}$ of type $\rho$ an element of $U_{\mathsf{P},\rho}$.
  %Moreover, $s[\mathsf{V}_1/d_1,\ldots,\mathsf{V}_n/d_n]$ denotes a state that is identical to $s$, with the exception that it assigns to each $\mathsf{V}_i$ the corresponding value $d_i$.
  %A Herbrand state of a program $\mathsf{P}$ is a state that assigns to $\mathsf{V}$ an element of $U_{\mathsf{P},\rho}$.
\end{defi}
Given a Herbrand interpretation $I$ and state $s$, we can define the semantics of expressions with respect to $I$ and $s$.
\begin{defi}
Let $\mathsf{P}$ be a program. Also, let $I$ be a Herbrand interpretation and $s$ a
Herbrand state of $\mathsf{P}$. Then the semantics of expressions with respect to $I$ and $s$ is defined as follows:
\begin{itemize}
\item $\mwrs{\mathsf{c}}{I}{s} = I(\mathsf{c})=\mathsf{c}$, for every individual constant $\mathsf{c}$;
\item $\mwrs{\mathsf{p}}{I}{s} =I(\mathsf{p})=\mathsf{p}$, for every predicate constant $\mathsf{p}$;
\item $\mwrs{\mathsf{V}}{I}{s} = s(\mathsf{V})$, for every variable $\mathsf{V}$;
\item $\mwrs{(\mathsf{f}\, \mathsf{E}_1\, \cdots\,\mathsf{E}_n)}{I}{s}
       =(I(\mathsf{f})\, \mwrs{\mathsf{E_1}}{I}{s}\, \cdots\,\mwrs{\mathsf{E_n}}{I}{s})
     = (\mathsf{f}\, \mwrs{\mathsf{E}_1}{I}{s}\, \cdots\,\mwrs{\mathsf{E}_n}{I}{s})$, for every $n$-ary function symbol $\mathsf{f}$;
\item $\mwrs{(\mathsf{E}_1\,\mathsf{E}_2)}{I}{s}= (\mwrs{\mathsf{E}_1}{I}{s}\,\mwrs{\mathsf{E}_2}{I}{s})$;
\item $\mwrs{(\mathsf{E}_1\approx \mathsf{E}_2)}{I}{s}= (\mwrs{\mathsf{E}_1}{I}{s}\approx\mwrs{\mathsf{E}_2}{I}{s})$;
\item $\mwrs{\mathsf{(\pnot E)}}{I}{s}= (\pnot \mwrs{\mathsf{E}}{I}{s}).$
\end{itemize}
\end{defi}

Since we are dealing with Herbrand interpretations, it is easy to see that for every Herbrand state $s$ and ground expression $\mathsf{E}$, we have $\mwrs{\mathsf{E}}{I}{s} = \mathsf{E}$.
%Therefore, it is not necessary that we insist on the distinction between the two.
%This allows us to see the valuation function of any Herbrand interpretation as assigning truth values directly to ground atoms, as opposed to semantic elements.
Therefore, if $\mathsf{E}$ is a ground literal, we can write $\vwrt{\mathsf{E}}{I}$ instead of $\vwrt{\mwrs{\mathsf{E}}{I}{s}}{I}$.
Stretching this abuse of notation a little further, we can extend a valuation function to assign truth values to ground conjunctions of literals:

\begin{defi}
Let $\mathsf{P}$ be a program and $I$ be a Herbrand interpretation of $\mathsf{P}$. We define $\vwrt{\mathsf{L}_1, \ldots, \mathsf{L}_n}{I} = min\{\vwrt{\mathsf{L}_1}{I}, \ldots, \vwrt{\mathsf{L}_n}{I}\}$
for all $\mathsf{L}_1, \ldots, \mathsf{L}_n\in U^+_{\mathsf{P},o}$.
\end{defi}
Based on the above definition, we can define the concept of Herbrand models for our higher-order programs in the same way as in classical logic programming.
\begin{defi}
Let $\mathsf{P}$ be a program and $I$ be a Herbrand interpretation of $\mathsf{P}$. We say $I$ is a \emph{model} of $\mathsf{P}$ if $\vwrt{\mwrs{\mathsf{A}}{I}{s}}{I}\geq\vwrt{\mwrs{\mathsf{L}_1}{I}{s},\ldots, \mwrs{\mathsf{L}_m}{I}{s}}{I}$ holds for every clause $\mathsf{A} \leftarrow \mathsf{L}_1,\ldots, \mathsf{L}_m$ and every Herbrand state $s$ of $\mathsf{P}$.
\end{defi}
Bezem's semantics is based on the observation that, given a positive higher-order program, we can use the minimum model semantics
of its ground instantiation as a (two-valued) valuation function defining a Herbrand interpretation for the initial program itself. We use
the same idea for $\mathcal{H}$ programs; the only difference is that we employ the infinite-valued model of the ground instantiation of the
program as the valuation function.

\hide{
Bezem's semantics is based on the simple observation that the ground instantiation of a higher-order program could be treated as a propositional program. This allows us to calculate the infinite-valued model of this ground ``program'' and use the truth values it assigns to ground atoms, in order to assign meaning to the higher-order program. In essence, we can use the infinite-valued model, as well as any other Herbrand interpretation of the ground instantiation of the program, as a valuation function defining an interpretation for the higher-order program itself. We exploit this idea in the following definition and on other instances throughout the rest of the paper.
}
\begin{defi}
Let $\mathsf{P}$ be a program. Also, let $\mathsf{Gr(P)}$ be the ground instantiation of $\mathsf{P}$ and let $\wgp$ be the infinite-valued model of $\mathsf{Gr(P)}$. We define $\wfmp$ to be the Herbrand interpretation of $\mathsf{P}$ such that $\vwrt{\mathsf{A}}{\wfmp} = \wgp(\mathsf{A})$ for every $\mathsf{A}\in U_{\mathsf{P},o}$.
\end{defi}

We adopt the notation $I\parallel v$ from Section \ref{infinite_valued}, to signify the set of atoms which are assigned a certain truth value $v \in \mathbb{V}$ by a Herbrand interpretation $I$; that is,  $I\parallel v=\{\mathsf{A} \mid \mathsf{A}\in U_{\mathsf{P},o} \mbox{ and } \vwrt{\mathsf{A}}{I}=v\}$. Then the relations $\aleq[\alpha]$, $\ale[\alpha]$, $=_\alpha$, $\aleq$ and $\ale$ on Herbrand interpretations of a higher-order program can be defined in exactly the same manner as in Section~\ref{infinite_valued}.

The next theorem verifies that our semantics is well-defined, in the sense that the interpretation $\wfmp$, which we chose as the meaning of a program $\mathsf{P}$, is indeed a model of $\mathsf{P}$. In fact it is the minimum, with respect to $\aleq$, model of $\mathsf{P}$.
\hide{

The following lemma will be useful for this purpose.
\begin{lem}
\label{lm:substitution_lemma}
Let $\mathsf{P}$ be a program, $I$ be an interpretation of $\mathsf{P}$ and $\mathsf{E}$ be an expression of some type $\rho$. Also, let $s$ be a Herbrand state of $\mathsf{P}$ and $\theta$ be a ground substitution with $vars(\mathsf{E}) \subseteq dom(\theta)$ and such that $s(\mathsf{V}) = \mwrs{\theta(\mathsf{V})}{I}{s'}$ for every variable $\mathsf{V}$ in $vars(\mathsf{E})$ and every state $s'$. Then $\mwrs{\mathsf{E}\theta}{I}{s'}= \mwrs{\mathsf{E}}{I}{s}$ for all $s'$.
\end{lem}
\begin{proof}
Trivial, by induction on the structure of $\mathsf{E}$.
\end{proof}

}
\begin{thm}
\label{thr:min_mod}
$\wfmp$ is the minimum (with respect to $\aleq$) Herbrand model of $\mathsf{P}$.
\end{thm}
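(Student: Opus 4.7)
My plan is to reduce the statement directly to the analogous fact about the propositional program $\mathsf{Gr(P)}$ (Theorem~\ref{minimum}), by setting up a one-to-one correspondence between Herbrand interpretations of $\mathsf{P}$ and infinite-valued interpretations of $\mathsf{Gr(P)}$.

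First I would observe that by Definition~\ref{HInterp} any Herbrand interpretation $I$ of $\mathsf{P}$ is determined by its valuation $\vwrt{\cdot}{I}$ restricted to $U_{\mathsf{P},o}$, since the behaviour on $\approx$-expressions and on $\pnot$-literals is fixed by the interpretation's constraints. Conversely any function $J:U_{\mathsf{P},o}\to\mathbb{V}$ canonically extends to a Herbrand interpretation, and the extension rules for $\pnot$ coincide exactly with those in Definition~\ref{interpretation} for the propositional case. So $I\mapsto \vwrt{\cdot}{I}\!\restriction_{U_{\mathsf{P},o}}$ is a bijection between Herbrand interpretations of $\mathsf{P}$ and infinite-valued interpretations of $\mathsf{Gr(P)}$, and this bijection preserves $\aleq$ (since both orderings are defined in terms of the sets $I\parallel v$ of atoms taking each truth value $v$).

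Next, the substitution/state lemma: for any state $s$ of $\mathsf{P}$ and any expression $\mathsf{E}$, a routine induction on $\mathsf{E}$ shows $\mwrs{\mathsf{E}}{I}{s} = \mathsf{E}\theta_s$, where $\theta_s$ is the ground substitution mapping each variable $\mathsf{V}\in vars(\mathsf{E})$ to $s(\mathsf{V})$. This works because in a Herbrand interpretation constants, function symbols, and predicate constants denote themselves, and application is syntactic. Consequently, for a clause $\mathsf{A}\leftarrow \mathsf{L}_1,\ldots,\mathsf{L}_m$ in $\mathsf{P}$ and a state $s$, the inequality $\vwrt{\mwrs{\mathsf{A}}{I}{s}}{I}\geq \vwrt{\mwrs{\mathsf{L}_1}{I}{s},\ldots,\mwrs{\mathsf{L}_m}{I}{s}}{I}$ is precisely the statement that the ground clause $\mathsf{A}\theta_s\leftarrow \mathsf{L}_1\theta_s,\ldots,\mathsf{L}_m\theta_s\in \mathsf{Gr(P)}$ is satisfied by the propositional interpretation corresponding to $I$. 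Since as $s$ ranges over states, $\theta_s$ ranges exactly over the ground substitutions used in Definition~\ref{ground_instantiation_definition}, we conclude that $I$ is a Herbrand model of $\mathsf{P}$ if and only if its corresponding propositional interpretation is a model of $\mathsf{Gr(P)}$.

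Putting the pieces together: by construction $\wfmp$ is the Herbrand interpretation whose propositional counterpart is $\wgp$, the minimum infinite-valued model of $\mathsf{Gr(P)}$ (Theorem~\ref{minimum}). By the equivalence just established, $\wfmp$ is a Herbrand model of $\mathsf{P}$. Now let $I$ be any Herbrand model of $\mathsf{P}$; its corresponding propositional interpretation $I'$ is a model of $\mathsf{Gr(P)}$, so $\wgp\aleq I'$ by the minimality of $\wgp$, and hence $\wfmp\aleq I$ by the order-preserving bijection. The main technical obstacle is really just being careful with the valuation clauses for $\approx$ and $\pnot$ in the Herbrand setting, making sure they line up with the propositional treatment (in particular, a ground $\mathsf{E}_1\approx \mathsf{E}_2$ gets value $T_0$ or $F_0$ depending on syntactic identity, as stipulated just after Definition~\ref{ground_instantiation_definition}), and noting that the equational and negative literals in clause bodies are never assigned values by $\wgp$ directly but only via the extension rules that $\wfmp$ also obeys.
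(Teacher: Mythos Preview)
Your proposal is correct and follows essentially the same approach as the paper: both reduce to Theorem~\ref{minimum} via the correspondence between Herbrand interpretations of $\mathsf{P}$ and propositional interpretations of $\mathsf{Gr(P)}$, together with the substitution lemma $\mwrs{\mathsf{E}}{I}{s}=\mathsf{E}\theta_s$. The only cosmetic difference is that the paper argues minimality by contradiction (assuming a model $\mathcal{M}$ with $\wfmp\not\ale\mathcal{M}$ and deriving a contradiction with the minimality of $\wgp$), whereas you set up the order-preserving bijection explicitly and deduce minimality directly; the content is the same.
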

\begin{proof}
Let $\mathsf{Gr(P)}$ be the ground instantiation of $\mathsf{P}$ and $\wgp$ be the infinite-valued model of $\mathsf{Gr(P)}$.
Recall (Definition~\ref{def:HState}) that a Herbrand state $s$ of $\mathsf{P}$ assigns to each variable $\mathsf{V}$ of type $\rho$ an element of $U_{\mathsf{P},\rho}$ and that this (semantic) element is in fact a ground expression of the same type. Moreover, recall (Definition~\ref{def:ground_sub}) that a ground substitution, which includes $\mathsf{V}$ in its domain, also assigns a ground expression of type $\rho$ to $\mathsf{V}$; and that such an expression coincides with its own meaning, under any Herbrand interpretation and any state. Therefore,
for every Herbrand state $s$ of $\mathsf{P}$ there exists a ground substitution $\theta$ such that $s(\mathsf{V}) = \mwrs{\theta(\mathsf{V})}{\wfmp}{s'}$ for all states $s'$ and variables $\mathsf{V}$ in $\mathsf{P}$. Also, for every clause $\mathsf{A} \leftarrow \mathsf{L}_1, \ldots, \mathsf{L}_m$ in $\mathsf{P}$ there exists a respective ground instance $\mathsf{A}\theta \leftarrow \mathsf{L}_1\theta, \ldots, \mathsf{L}_m\theta$ in $\mathsf{Gr(P)}$. As $\wgp$ is a model of $\mathsf{Gr(P)}$, $\wgp(\mathsf{A}\theta)\geq min\{\wgp(\mathsf{L}_1\theta), \ldots, \wgp(\mathsf{L}_m\theta)\}$. By definition, $\vwrt{\mathsf{A}\theta}{\wfmp}= \wgp(\mathsf{A}\theta)$ and $\vwrt{\mathsf{L}_i\theta}{\wfmp}= \wgp(\mathsf{L}_i\theta)$ for all $i\leq m$. Moreover, it is easy to see (by a trivial induction on the structure of the expression) that $\mathsf{A}\theta= \mwrs{\mathsf{A}}{\wfmp}{s}$, which implies that $\vwrt{\mathsf{A}\theta}{\wfmp}= \vwrt{\mwrs{\mathsf{A}}{\wfmp}{s}}{\wfmp}$. Similarly, $\vwrt{\mathsf{L}_i\theta}{\wfmp}= \vwrt{\mwrs{\mathsf{L}_i}{\wfmp}{s}}{\wfmp}$, for all $i\leq m$. Then $\vwrt{\mwrs{\mathsf{A}}{\wfmp}{s}}{\wfmp}\geq min\{\vwrt{\mwrs{\mathsf{L}_1}{\wfmp}{s}}{\wfmp}, \ldots, \vwrt{\mwrs{\mathsf{L}_m}{\wfmp}{s}}{\wfmp}\}$ follows immediately and implies that $\wfmp$ is a model of $\mathsf{P}$. To see that it is minimum, assume there exists a model $\mathcal{M}$ of the higher-order program $\mathsf{P}$, distinct from $\wfmp$, which does not satisfy $\wfmp \ale \mathcal{M}$.
%This means there exists an ordinal $\alpha$ such that $\mathcal{M} \ale[\alpha] \wfmp$.
Then the valuation function $\vwrt{\cdot}{\mathcal{M}}$ of $\mathcal{M}$ defines a (first-order) interpretation $M$ for the ground instantiation $\mathsf{Gr(P)}$ of $\mathsf{P}$, if, for every ground atom $\mathsf{A}$, we take $M(\mathsf{A}) = \vwrt{\mathsf{A}}{\mathcal{M}}$. It is obvious that $\wgp\not\ale M$, since $M(\mathsf{A}) = \vwrt{\mathsf{A}}{\mathcal{M}}$ and $\vwrt{\mathsf{A}}{\wfmp} = \wgp(\mathsf{A})$ for every ground atom $\mathsf{A}$ imply that $M\parallel v=\mathcal{M}\parallel v$ and $\wgp\parallel v = \wfmp\parallel v$ for every truth value $v\in \mathbb{V}$.
Also, $M$ is distinct from $\wgp$, since $M(\mathsf{B}) = \vwrt{\mathsf{B}}{\mathcal{M}} \neq \vwrt{\mathsf{B}}{\wfmp} = \wgp(\mathsf{B})$ for at least one ground atom $\mathsf{B}$. Next we demonstrate that $M$ is a model of the ground instantiation $\mathsf{Gr(P)}$ of $\mathsf{P}$, which is of course a contradiction, since $\wgp$ is the minimum model of $\mathsf{Gr(P)}$ and $\wgp\aleq M$ should hold. Indeed, every clause in $\mathsf{Gr(P)}$ is a ground instance of a clause $\mathsf{A} \leftarrow \mathsf{L}_1, \ldots, \mathsf{L}_m$ in $\mathsf{P}$ and is therefore of the form $\mathsf{A}\theta \leftarrow \mathsf{L}_1\theta, \ldots, \mathsf{L}_m\theta$ for some ground substitution $\theta$. Consider a Herbrand state $s$, such that $s(\mathsf{V}) = \theta(\mathsf{V})$ for every variable $\mathsf{V}$ in $\mathsf{P}$. Because we assumed the higher-order interpretation $\mathcal{M}$ to be a model of $\mathsf{P}$, we have that $\vwrt{\mwrs{\mathsf{A}}{\mathcal{M}}{s}}{\mathcal{M}}\geq min\{\vwrt{\mwrs{\mathsf{L}_1}{\mathcal{M}}{s}}{\mathcal{M}}, \ldots, \vwrt{\mwrs{\mathsf{L}_m}{\mathcal{M}}{s}}{\mathcal{M}}\}$. Again, it is easy to see that $\mathsf{A}\theta= \mwrs{\mathsf{A}}{\mathcal{M}}{s}$ and therefore $\vwrt{\mathsf{A}\theta}{\mathcal{M}}= \vwrt{\mwrs{\mathsf{A}}{\mathcal{M}}{s}}{\mathcal{M}}$. Similarly, $\vwrt{\mathsf{L}_i\theta}{\mathcal{M}}= \vwrt{\mwrs{\mathsf{L}_i}{\mathcal{M}}{s}}{\mathcal{M}}$ for all $i\leq m$. Additionally, by the construction of $M$, $\vwrt{\mathsf{A}\theta}{\mathcal{M}}= M(\mathsf{A}\theta)$ and $\vwrt{\mathsf{L}_i\theta}{\mathcal{M}}= M(\mathsf{L}_i\theta)$ for all $i\leq m$, so $M(\mathsf{A}\theta)\geq min\{M(\mathsf{L}_1\theta), \ldots, M(\mathsf{L}_m\theta)\}$, which implies that $M$ is a model of $\mathsf{Gr(P)}$. Because we reached a contradiction, $\wfmp$ must be the minimum model of $\mathsf{P}$.
\end{proof}

\section{Extensionality of the Proposed Semantics}\label{extensionality}\label{sec:ext_mon}
In this section we show that the infinite-valued model we defined in the previous section enjoys
the extensionality property, as this was defined in~\cite{Bezem99}. For an intuitive explanation
of this property, the reader should consult again the example of Section~\ref{intuitive}. In order
to formally define this notion, Bezem introduced~\cite{Bezem99,Bezem01} relations $\exeq[\mathit{val},\tau]$
over the set of expressions of a given type $\tau$ and under a given valuation function $\mathit{val}$. These relations
intuitively express extensional equality of type $\tau$. For the purposes of this paper only extensional equality of argument types will be needed, for which the formal definition is as follows:
%For each type $\tau$, we can define relations on the elements of $\mo{\tau}$ based on the equality of truth values assigned by a valuation function.
%
\begin{defi}
Let $\mathcal{S}$ be a functional type structure and $\mathit{val}$ be a valuation function for $\mathcal{S}$. For every argument type $\rho$ we define the relations $\exeq[\mathit{val},\rho]$ on $\mo{\rho}$ as follows: Let $d, d'\in \mo{\rho}$; then $d \exeq[\mathit{val},\rho] d'$ if and only if
\begin{enumerate}
\item $\rho = \iota$ and $d=d'$, or
\item $\rho = o$ and $\mathit{val}(d) = \mathit{val}(d')$, or
\item $\rho = \rho'\rightarrow\pi$ and $d\,e \exeq[\mathit{val},\pi] d'\,e'$ for all $e, e'\in\mo{\rho'}$, such that $e\exeq[\mathit{val},\rho'] e'$ and $d\,e, d'\,e'$ are both defined.
\end{enumerate}
\end{defi}
One can easily verify that, for all $d, d'\in \mo{\rho_1\rightarrow\cdots\rightarrow\rho_n\rightarrow o}$, $e_1, e_1' \in \mo{\rho_1}$, \ldots, $e_n, e_n'\in \mo{\rho_n}$, if $d\exeq[\mathit{val},\rho_1\rightarrow\cdots\rightarrow\rho_n\rightarrow o] d'$, $e_1\exeq[\mathit{val},\rho_1] e_1'$, \ldots, $e_n\exeq[\mathit{val},\rho_n] e_n'$ and $d\ e_1\ \cdots\ e_n, d'\ e_1'\ \cdots\ e_n'$ are both defined, then $\mathit{val}(d\ e_1\ \cdots\ e_n) = \mathit{val}(d'\ e_1'\ \cdots\ e_n')$.

Generally, it is not guaranteed that such relations will be equivalence relations; rather they are partial equivalences (they are shown in \cite{Bezem99} to be symmetric and transitive). However, we are going to see that the minimum model of a program defines true equivalence relations for all types $\tau$.
\begin{defi}
Let $\mathsf{P}$ be a program and let $I$ be a Herbrand interpretation of $\mathsf{P}$. We say $I$ is \emph{extensional} if for all argument types $\rho$  the relations $\exeq[\mathit{val}_I, \rho]$ are reflexive, i.e. for all $\mathsf{E} \in \mo{\rho}$, it holds that $\mathsf{E} \exeq[\mathit{val}_I, \rho] \mathsf{E}$.
\end{defi}
\begin{thm}[Extensionality]
\label{thr:extensionality}
$\wfmp$ is extensional.
\end{thm}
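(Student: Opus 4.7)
The reflexivity $\mathsf{E} \exeq[\mathit{val}_{\wfmp}, \rho] \mathsf{E}$ demanded by the theorem is trivial when $\rho = \iota$ and immediate when $\rho = o$. The content lies in $\rho = \rho_1 \to \cdots \to \rho_n \to o$, where unfolding the definition of $\exeq$ reduces the claim to the following: for every ground $\mathsf{E} \in U_{\mathsf{P},\rho}$ and every pair of tuples $(\mathsf{F}_1, \ldots, \mathsf{F}_n)$, $(\mathsf{F}_1', \ldots, \mathsf{F}_n')$ with $\mathsf{F}_i \exeq[\mathit{val}_{\wfmp}, \rho_i] \mathsf{F}_i'$, the atoms $(\mathsf{E}\,\mathsf{F}_1\cdots\mathsf{F}_n)$ and $(\mathsf{E}\,\mathsf{F}_1'\cdots\mathsf{F}_n')$ receive the same value under $\mathit{val}_{\wfmp}$. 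Since this valuation coincides on ground atoms with $\wgp$, the task is to show that the infinite-valued model of $\mathsf{Gr(P)}$ is invariant under substituting $\exeq$-equivalent subterms inside ground atoms.

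The plan is transfinite induction on the approximation index $\alpha$, with the strengthened statement that for every $\alpha$ both $M_\alpha \parallel T_\alpha$ and $M_\alpha \parallel F_\alpha$ are closed under replacing, within an atom of shape $\mathsf{E}\,\mathsf{F}_1\cdots\mathsf{F}_n$, any argument $\mathsf{F}_i$ by a $\exeq[\mathit{val}_{\wfmp}, \rho_i]$-equivalent term $\mathsf{F}_i'$. Lemma~\ref{lm:approx_alpha_eq} then transports this closure to $\wfmp \parallel T_\alpha$ and $\wfmp \parallel F_\alpha$ for every $\alpha$, and since the set $\wfmp \parallel \unk$ is the complement of $\bigcup_\alpha (\wfmp \parallel T_\alpha \cup \wfmp \parallel F_\alpha)$, the same closure holds there as well, yielding extensionality. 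The outer induction splits into the three cases for $\alpha$ (zero, successor, limit), and each case carries an inner induction on $n < \omega$ tracking the iterates $T^n_\mathsf{Gr(P)}(I_\alpha)$ that build $T^\omega_{\mathsf{Gr(P)}, \alpha}$. The key inductive step observes that if a ground clause $\mathsf{A}\theta \leftarrow \mathsf{L}_1\theta, \ldots, \mathsf{L}_m\theta$ of $\mathsf{Gr(P)}$ forces $T_\alpha$ (resp.\ $F_\alpha$) on $\mathsf{A}\theta$ at iteration $n+1$, then for any substitution $\theta'$ obtained from $\theta$ by replacing some $\theta(\mathsf{V})$ by a $\exeq$-equivalent ground term, the parallel ground clause $\mathsf{A}\theta' \leftarrow \mathsf{L}_1\theta', \ldots, \mathsf{L}_m\theta'$ is also in $\mathsf{Gr(P)}$ and forces the same value on $\mathsf{A}\theta'$: positive literals are dispatched by the inner induction and by the outer IH applied to their subatoms, equality literals by the fact that $\exeq$ on $\iota$ is syntactic identity, and negative literals $\pnot \mathsf{B}$ by invoking the outer IH at the strictly smaller $\beta < \alpha$ at which $\mathsf{B}$ acquired its $F_\beta$ or $T_\beta$ value.

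The hard part will be the entanglement between the defining relation $\exeq$ and the model under construction: extensional equivalence is defined in terms of the completed valuation $\mathit{val}_{\wfmp}$, whereas the induction runs through approximations $M_\alpha$ that do not yet know this valuation. The resolution is to fix $\mathit{val}_{\wfmp}$ as a parameter from the outset, so that the $\exeq$-substitutions $\mathsf{F}_i \mapsto \mathsf{F}_i'$ are meaningful before the induction even starts; the transfinite argument then only has to verify that each approximation stratum is closed under these predetermined substitutions. A second subtlety, handled cleanly by the staged construction of $M_\alpha$, is that a negative literal $\pnot \mathsf{B}$ can contribute a value of order $\geq \alpha$ to its clause body only when $\mathsf{B}$ itself already carries a value of strictly smaller order $\beta < \alpha$; this ensures that the outer IH is invoked only at ordinals strictly below $\alpha$, furnishing the required well-foundedness. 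The care required to thread these observations through the three cases of the outer induction, while coordinating them with the inner induction on $n$ and with the behaviour of $T^\omega_{\mathsf{Gr(P)}, \alpha}$ at the limit levels, is what makes the proof---as the text warns---quite intricate.
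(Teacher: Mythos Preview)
Your plan omits the outermost induction that actually makes the argument go through: an induction on the structure of the predicate type. The paper's proof fixes a type $\pi=\rho_1\to\cdots\to\rho_m\to o$, assumes reflexivity of $\exeq[\wgp,\rho]$ at every $\rho$ simpler than $\pi$, and only then runs the transfinite induction on $\alpha$ (with the inner induction on $n$) that you describe. This type hypothesis is indispensable in the case you gloss over as ``positive literals dispatched by the inner induction and the outer IH'': when a body literal is headed by a predicate \emph{variable} $\mathsf{V}$, the two ground instances are $\theta(\mathsf{V})\,\mathsf{E}_1\theta\cdots\mathsf{E}_{m'}\theta$ and $\theta'(\mathsf{V})\,\mathsf{E}_1\theta'\cdots\mathsf{E}_{m'}\theta'$. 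Your inductions let you swap the argument part, reaching $\theta(\mathsf{V})\,\mathsf{E}_1\theta'\cdots\mathsf{E}_{m'}\theta'$, but you still have to swap the \emph{head} $\theta(\mathsf{V})$ for $\theta'(\mathsf{V})$. The only information available about these two ground terms is the hypothesis $\theta(\mathsf{V})\exeq[\wgp,\rho]\theta'(\mathsf{V})$, and by the very definition of $\exeq$ this can only be exploited when the arguments fed to both sides are themselves $\exeq$-related---so you need $\mathsf{E}_j\theta'\exeq[\wgp,\rho_j']\mathsf{E}_j\theta'$ for each argument type $\rho_j'$ of $\mathsf{V}$.

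That reflexivity is precisely the theorem being proved, and your hypothesis on $\alpha$ gives no purchase: the atoms $\mathsf{E}_j\theta'\,e_1\cdots e_l$ witnessing this reflexivity may take any value in $\wgp$, including values of order $\geq\alpha$, so the closure of $M_\beta\parallel T_\beta$ and $M_\beta\parallel F_\beta$ for $\beta<\alpha$ is useless. The paper breaks this circularity by observing that the argument types of $\mathsf{V}$ are strictly simpler than the type of $\mathsf{V}$, hence strictly simpler than $\pi$ (because $\mathsf{V}$ ranges over the formal parameters of the clause head, all of which have types simpler than $\pi$), so the type induction hypothesis supplies the missing reflexivity in $\wgp$ outright. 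Correspondingly, the paper's $P_1(\alpha)$ and $P_2(\alpha)$ are stated not for arbitrary atoms but for atoms whose free variables all have types simpler than $\pi$; without that restriction and the enclosing type induction, the head-swap step has no justification.
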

\begin{proof}
Since the valuation function of $\wfmp$ is $\wgp$, essentially we need to show that $\mathsf{E}\exeq[\wgp,\rho] \mathsf{E}$, for every ground expression $\mathsf{E}$ of every argument type $\rho$. We perform an induction on the structure of $\rho$. For the base types $\iota$ and $o$ the statement holds by definition. %, as it also does for functional types $\iota^n\rightarrow\iota$.
For the induction step, we prove the statement for a predicate type $\pi=\rho_1\rightarrow\cdots\rightarrow\rho_m\rightarrow o$, assuming that it holds for all types simpler than $\pi$ (i.e., for the types $\rho_1,\ldots,\rho_m,o$ and, recursively, the types that are simpler than $\rho_1,\ldots,\rho_m$).
Let $\mathsf{A}$ be any atom of the following form: $\mathsf{A}$ is headed by a predicate constant and all variables in $vars(\mathsf{A})$ are of types simpler than $\pi$. Let $\theta, \theta'$ be  ground substitutions, such that $vars(\mathsf{A}) \subseteq dom(\theta),dom(\theta')$ and $\theta(\mathsf{V}) \exeq[\wgp,\rho] \theta'(\mathsf{V})$ for any $\mathsf{V}:\rho$ in $vars(\mathsf{A})$. We claim it suffices to show the following two properties $P_1(\alpha)$ and $P_2(\alpha)$, for all ordinals $\alpha$:
\newlist{indentdescription}{description}{10}
\setlist[indentdescription]{leftmargin=1em,labelindent=1em}
\begin{indentdescription}
\item[$P_1(\alpha)$] if $\vgr{\mathsf{A}\theta}{M_\alpha}= T_\alpha$ then $\vgr{\mathsf{A}\theta'}{\wgp}=T_\alpha$;
\item[$P_2(\alpha)$] if $\vgr{\mathsf{A}\theta}{M_{\alpha}}=F_\alpha$ then $\vgr{\mathsf{A}\theta'}{\wgp}=F_\alpha$.
\end{indentdescription}
To see why proving the above properties is enough to establish that $\mathsf{E}\exeq[\wgp, \pi] \mathsf{E}$, observe the following: first of all, we assumed that $\pi$ is of the form $\rho_1\rightarrow\cdots\rightarrow\rho_m\rightarrow o$, so if $\mathsf{V}_1:\rho_1, \ldots, \mathsf{V}_m:\rho_m$ are variables, then $\mathsf{E}\; \mathsf{V}_1\; \cdots\; \mathsf{V}_m$ is an atom of the form described above. Also, by Lemma \ref{lm:approx_alpha_eq} we have that $\vgr{\mathsf{E}\; \theta(\mathsf{V}_1) \; \cdots\; \theta(\mathsf{V}_m)}{\wgp}=T_\alpha$ if and only if $\vgr{\mathsf{E}\; \theta(\mathsf{V}_1) \; \cdots\; \theta(\mathsf{V}_m)}{M_\alpha}=T_\alpha$. If $P_1(\alpha)$ holds, the latter implies that $\vgr{\mathsf{E}\; \theta'(\mathsf{V}_1) \; \cdots\; \theta'(\mathsf{V}_m)}{\wgp}=T_\alpha$. Because the relations $\exeq[\wgp,\rho_i]$ are symmetric, $\theta$ and $\theta'$ are interchangeable. Therefore the same argument can be used to infer the reverse implication, i.e. $\vgr{\mathsf{E}\; \theta'(\mathsf{V}_1) \; \cdots\; \theta'(\mathsf{V}_m)}{\wgp}=T_\alpha\Rightarrow\vgr{\mathsf{E}\; \theta(\mathsf{V}_1) \; \cdots\; \theta(\mathsf{V}_m)}{\wgp}=T_\alpha$, thus resulting to an equivalence. If $P_2(\alpha)$ holds, the analogous equivalence can be shown for the value $F_\alpha$, in the same way.
Finally, the equivalence for the $0$ value follows by a simple elimination argument: if for example $\vgr{\mathsf{E}\; \theta(\mathsf{V}_1) \; \cdots\; \theta(\mathsf{V}_m)}{\wgp}=0$, we make the assumption that $\vgr{\mathsf{E}\; \theta'(\mathsf{V}_1) \; \cdots\; \theta'(\mathsf{V}_m)}{\wgp}=T_\alpha$ (respectively, $F_\alpha$) for some ordinal $\alpha$. Then, by Lemma \ref{lm:approx_alpha_eq}, $\vgr{\mathsf{E}\; \theta'(\mathsf{V}_1) \; \cdots\; \theta'(\mathsf{V}_m)}{M_\alpha}=T_\alpha$ (respectively, $F_\alpha$), so if property $P_1(\alpha)$ (respectively, $P_2(\alpha)$) holds, it gives us that $\vgr{\mathsf{E}\; \theta(\mathsf{V}_1) \; \cdots\; \theta(\mathsf{V}_m)}{\wgp}=T_\alpha$ (respectively, $F_\alpha$), which is a contradiction. It follows that $\vgr{\mathsf{E}\; \theta'(\mathsf{V}_1) \; \cdots\; \theta'(\mathsf{V}_m)}{\wgp}=0$. Again, we can show the reverse implication by the same argument.

We will proceed by a second induction on $\alpha$.
\begin{description}
\item[Second Induction Basis ($\alpha=0$)]
We have $M_0 = T^\omega_{\mathsf{P},0}(\emptyset)$. Observe that $\vgr{\mathsf{A}\theta}{T^\omega_{\mathsf{P},0}(\emptyset)}$ will evaluate to $T_0$ if and only if there exists some $n<\omega$ for which $\vgr{\mathsf{A}\theta}{T^n_{\mathsf{P}}(\emptyset)}=T_0$. On the other hand, it will evaluate to $F_0$ if and only if there does not exist a $n<\omega$ for which $\vgr{\mathsf{A}\theta}{T^n_{\mathsf{P}}(\emptyset)}\neq F_0$. Therefore, in order to prove $P_1(0)$ and $P_2(0)$, we first need to perform a third induction on $n$ and prove the following two properties:
\begin{indentdescription}
\item[$P'_1(0,n)$] if $\vgr{\mathsf{A}\theta}{T_{\mathsf{P}}^n(\emptyset)}= T_0$ then $\vgr{\mathsf{A}\theta'}{\wgp}=T_0$;
\item[$P'_2(0,n)$] if $\vgr{\mathsf{A}\theta}{T_{\mathsf{P}}^n(\emptyset)}>F_0$ then $\vgr{\mathsf{A}\theta'}{\wgp}>F_0$.
\end{indentdescription}

\item[Third Induction Basis ($n=0$)]
Both $P'_1(0,0)$ and $P'_2(0,0)$ hold vacuously, since $T_{\mathsf{P}}^0(\emptyset)=\emptyset$, i.e. the interpretation that assigns $F_0$ to every atom.

\item[Third Induction Step ($n+1$)]
First we show $P_1'(0,n+1)$, assuming that $P_1'(0,n)$ holds. If $\vgr{\mathsf{A}\theta}{T_{\mathsf{P}}^{n+1}(\emptyset)}=T_0$, then there exists a clause $\mathsf{A}\theta\leftarrow \mathsf{L}_1,\ldots, \mathsf{L}_k$ in $\mathsf{Gr(P)}$ such that for each $i\leq k$, $\vgr{\mathsf{L}_i}{T_{\mathsf{P}}^{n}(\emptyset)}=T_0$. This implies that each $\mathsf{L}_i$ is a positive literal, since a negative one cannot be assigned the value $T_0$ in any interpretation.
This clause is a ground instance of a clause $\mathsf{p}\,\mathsf{V}_1\,\cdots\,\mathsf{V}_r \leftarrow \mathsf{K}_1,\ldots, \mathsf{K}_k$ in the higher-order program and there exists a substitution $\theta''$, such that $(\mathsf{p}\,\mathsf{V}_1\,\cdots\,\mathsf{V}_r)\theta''=\mathsf{A}$ and, for any variable $\mathsf{V}\not\in \{\mathsf{V}_1,\ldots,\mathsf{V}_r\}$ appearing in the body of the clause, $\theta''(\mathsf{V})$ is an appropriate ground term, so that $\mathsf{L}_i = \mathsf{K}_i\theta''\theta$ for all $i\leq k$. Observe that the variables appearing in the clause $(\mathsf{p}\,\mathsf{V}_1\,\cdots\,\mathsf{V}_r)\theta'' \leftarrow \mathsf{K}_1\theta'',\ldots, \mathsf{K}_k\theta''$ are exactly the variables appearing in $\mathsf{A}$ and they are all of types simpler than $\pi$. We distinguish the following cases for each $\mathsf{K}_i\theta''$, $i\leq k$:
\begin{enumerate}
\item \emph{$\mathsf{K}_i\theta''$ is of the form $(\mathsf{E}_1 \approx \mathsf{E}_2)$:} By definition, $\vgr{\mathsf{L}_i}{T_{\mathsf{P}}^{n}(\emptyset)}=\vgr{\mathsf{K}_i\theta''\theta}{T_{\mathsf{P}}^{n}(\emptyset)}=T_0$ implies that $\mathsf{E}_1\theta = \mathsf{E}_2\theta$. Since $\mathsf{E}_1$ and $\mathsf{E}_2$ are expressions of type $\iota$, all variables in $\mathsf{E}_1$ and $\mathsf{E}_2$ are also of type $\iota$ and, because $\exeq[\wgp,\iota]$ is defined as equality, we will have $\mathsf{E}_1\theta = \mathsf{E}_1\theta'$ and $\mathsf{E}_2\theta=\mathsf{E}_2\theta'$. Therefore $\mathsf{E}_1\theta' = \mathsf{E}_2\theta'$ and $\wgp(\mathsf{K}_i\theta''\theta')=T_0$ will also hold.
\item \emph{$\mathsf{K}_i\theta''$ is an atom and starts with a predicate constant:} As we observed, the variables appearing in $\mathsf{K}_i\theta''$ are of types simpler than $\pi$. By the third induction hypothesis, $\mathsf{K}_i\theta''$ satisfies property $P_1'(0,n)$ and therefore $\vgr{\mathsf{L}_i}{T_{\mathsf{P}}^{n}(\emptyset)}  = \vgr{\mathsf{K}_i\theta''\theta}{T_{\mathsf{P}}^{n}(\emptyset)} = T_0$ implies that $\wgp(\mathsf{K}_i\theta''\theta')=T_0$.
\item \emph{$\mathsf{K}_i\theta''$ is an atom and starts with a predicate variable:} Let $\mathsf{K}_i\theta''=\mathsf{V}\,\mathsf{E}_1\,\cdots\,\mathsf{E}_{m'}$ for some $\mathsf{V}\in vars(\mathsf{A})$. Then $\mathsf{B}=\theta(\mathsf{V})\,\mathsf{E}_1\,\cdots\,\mathsf{E}_{m'}$ is an atom that begins with a predicate constant and, by $vars(\mathsf{K}_i\theta'') \subseteq vars(\mathsf{A})$, all of the variables of $\mathsf{B}$ are of types simpler than $\pi$. Also, $\vgr{\mathsf{B}\theta}{T_{\mathsf{P}}^{n}(\emptyset)} =\vgr{\mathsf{K}_i\theta''\theta}{T_{\mathsf{P}}^{n}(\emptyset)}=T_0$, which, by property $P'_1(0,n)$ yields that $\vgr{\mathsf{B}\theta'}{\wgp}=\vgr{\theta(\mathsf{V})\,\mathsf{E}_1\theta'\,\cdots\,\mathsf{E}_{m'}\theta'}{\wgp}=T_0$ (1). Observe that the types of all arguments of $\theta(\mathsf{V})$, i.e. the types of $\mathsf{E}_j\theta'$ for all $j\leq m'$, are simpler than the type of  $\mathsf{V}$ and consequently, since $\mathsf{V} \in vars(\mathsf{A})$, simpler than $\pi$. For each $j\leq m'$, let $\rho_j$ be the type of $\mathsf{E}_j$ and let $\rho$ be the type of $\mathsf{V}$; by the first induction hypothesis, $\mathsf{E}_j\theta'\exeq[\wgp,\rho_j] \mathsf{E}_j\theta'$. Moreover, by assumption we have that $\theta(\mathsf{V}) \exeq[\wgp,\rho] \theta'(\mathsf{V})$. Then, by definition $\vgr{\theta(\mathsf{V})\,\mathsf{E}_1\theta'\,\cdots\,\mathsf{E}_{m'}\theta'}{\wgp}=\vgr{\theta'(\mathsf{V})\,\mathsf{E}_1\theta'\,\cdots\,\mathsf{E}_{m'}\theta'}{\wgp}$ and, by (1), $\vgr{\theta'(\mathsf{V})\,\mathsf{E}_1\theta'\,\cdots\,\mathsf{E}_{m'}\theta'}{\wgp}=T_0$.
\end{enumerate}
In conclusion, the clause $\mathsf{A}\theta' \leftarrow \mathsf{K}_1\theta''\theta',\ldots, \mathsf{K}_k\theta''\theta'$ is in $\mathsf{Gr(P)}$ and for each $i\leq k$, we have $\vgr{\mathsf{K}_i\theta''\theta'}{\wgp}=T_0$, therefore $\vgr{\mathsf{A}\theta'}{\wgp}=T_0$ must also hold.

This concludes the proof for $P'_1(0,n)$. Next we prove $P'_2(0,n+1)$, assuming $P'_2(0,n)$ holds. If $\vgr{\mathsf{A}\theta}{T_{\mathsf{P}}^{n+1}(\emptyset)}>F_0$, then there exists a clause $\mathsf{A}\theta\leftarrow \mathsf{L}_1,\ldots, \mathsf{L}_k$ in $\mathsf{Gr(P)}$ such that for each $i\leq k$, $\vgr{\mathsf{L}_i}{T_{\mathsf{P}}^{n}(\emptyset)}>F_0$.
This clause is a ground instance of a clause $\mathsf{p}\,\mathsf{V}_1\,\cdots\,\mathsf{V}_r \leftarrow \mathsf{K}_1,\ldots, \mathsf{K}_k$ in the higher-order program and there exists a substitution $\theta''$, such that $(\mathsf{p}\,\mathsf{V}_1\,\cdots\,\mathsf{V}_r)\theta''=\mathsf{A}$ and, for any variable $\mathsf{V}\not\in \{\mathsf{V}_1,\ldots,\mathsf{V}_r\}$ appearing in the body of the clause, $\theta''(\mathsf{V})$ is an appropriate ground term, so that $\mathsf{L}_i = \mathsf{K}_i\theta''\theta$ for all $i\leq k$. Observe that the variables appearing in the clause $(\mathsf{p}\,\mathsf{V}_1\,\cdots\,\mathsf{V}_r)\theta'' \leftarrow \mathsf{K}_1\theta'',\ldots, \mathsf{K}_k\theta''$ are exactly the variables appearing in $\mathsf{A}$ and they are all of types simpler than $\pi$. We can distinguish the following cases for each $\mathsf{K}_i\theta''$, $i\leq k$:
\begin{enumerate}
\item \emph{$\mathsf{K}_i\theta''$ is a positive literal:} A positive literal may take one of the three forms that we examined in our proof for property $P'_1(0,n+1)$. We can show that $\vgr{\mathsf{K}_i\theta''\theta'}{\wgp}> F_0$ by the same arguments we used in each case.
\item \emph{$\mathsf{K}_i\theta''$ is a negative literal:}
A negative literal cannot be assigned the value $F_0$ in any interpretation. Therefore $\vgr{\mathsf{K}_i\theta''\theta'}{\wgp}>F_0$ holds by definition.
\end{enumerate}
In conclusion, the clause $\mathsf{A}\theta' \leftarrow \mathsf{K}_1\theta''\theta',\ldots, \mathsf{K}_k\theta''\theta'$ is in $\mathsf{Gr(P)}$ and for each $i\leq k$, we have $\vgr{\mathsf{K}_i\theta''\theta'}{\wgp}>F_0$, therefore $\vgr{\mathsf{A}\theta'}{\wgp}>F_0$ must also hold.

This concludes the proof for $P'_2(0,n)$.
We will now use properties $P'_1(0,n)$ and $P'_2(0,n)$ in order to show $P_1(0)$ and $P_2(0)$. By definition, if $\vgr{\mathsf{A}\theta}{M_0}= \vgr{\mathsf{A}\theta}{T^\omega_{\mathsf{P},0}(\emptyset)}=T_0$, then there exists some $n<\omega$ such that $\vgr{\mathsf{A}\theta}{T_{\mathsf{P}}^{n}(\emptyset)}=T_0$. Applying $P'_1(0,n)$ to $\mathsf{A}\theta$ we immediately conclude that $\vgr{\mathsf{A}\theta'}{\wgp}= T_0$, which establishes property $P_1(0)$. Now let $\vgr{\mathsf{A}\theta}{M_0}=F_0$ and assume $\vgr{\mathsf{A}\theta'}{\wgp}\neq F_0$. By Lemma \ref{lm:approx_alpha_eq}, the latter can only hold if $\vgr{\mathsf{A}\theta'}{M_0}=\vgr{\mathsf{A}\theta'}{T_{\mathsf{P},0}^{\omega}(\emptyset)} \neq F_0$ and this, in turn, means that there exists at least one $n<\omega$ such that $\vgr{\mathsf{A}\theta'}{T_{\mathsf{P}}^{n}(\emptyset)}>F_0$. Then, reversing the roles of $\theta$ and $\theta'$, we can apply property $P'_2(0,n)$ to $\mathsf{A}\theta'$ and conclude that $\vgr{\mathsf{A}\theta}{\wgp}>F_0$, which, again by Lemma \ref{lm:approx_alpha_eq}, contradicts $\vgr{\mathsf{A}\theta}{M_0}=F_0$. Therefore it must be $\vgr{\mathsf{A}\theta'}{\wgp}= F_0$.

\item[Second Induction Step]
Now we prove properties $P_1(\alpha)$ and $P_2(\alpha)$ for an arbitrary countable ordinal $\alpha$, assuming that $P_1(\beta)$ and $P_2(\beta)$ hold for all $\beta<\alpha$.

We have $M_\alpha = T^\omega_{\mathsf{P},\alpha}(\baseap{\alpha})$. Again, we first perform a third induction on $n$ and prove two auxilary properties, as follows:
\begin{indentdescription}
\item[$P'_1(\alpha,n)$] if $\vgr{\mathsf{A}\theta}{T_{\mathsf{P}}^n(\baseap{\alpha})}\geq T_\alpha$ then $\vgr{\mathsf{A}\theta'}{\wgp}\geq T_\alpha$;
\item[$P'_2(\alpha,n)$] if $\vgr{\mathsf{A}\theta}{T_{\mathsf{P}}^n(\baseap{\alpha})}>F_\alpha$ then $\vgr{\mathsf{A}\theta'}{\wgp}>F_\alpha$.
\end{indentdescription}

\item[Third Induction Basis ($n=0$)]
We have $T_{\mathsf{P}}^0(\baseap{\alpha})=\baseap{\alpha}$. Observe that, whether $\alpha$ is a successor or a limit ordinal, $\baseap{\alpha}$ does not assign to any atom
the value $T_\alpha$ or any value that is greater than $F_\alpha$ and smaller than $T_\alpha$. So, whether we assume $\vgr{\mathsf{A}\theta}{T_{\mathsf{P}}^0(\baseap{\alpha})}\geq T_\alpha$ or $\vgr{\mathsf{A}\theta}{T_{\mathsf{P}}^0(\baseap{\alpha})}> F_\alpha$, it must be $\vgr{\mathsf{A}\theta}{T_{\mathsf{P}}^0(\baseap{\alpha})}=T_\beta$ for some ordinal $\beta< \alpha$. By Lemma \ref{lm:TPn-and-TPomega}, $\vgr{\mathsf{A}\theta}{M_\alpha}=T_\beta$ and so, by Lemma \ref{lm:approx_alpha_eq}, $\vgr{\mathsf{A}\theta}{M_\beta}=T_\beta$. Then, by the second induction hypothesis, property $P_1(\beta)$ holds and yields $\vgr{\mathsf{A}\theta'}{\wgp}=T_\beta$, which implies both $\vgr{\mathsf{A}\theta'}{\wgp}\geq T_\alpha$ and $\vgr{\mathsf{A}\theta'}{\wgp}>F_\alpha$. Therefore property $P'_1(\alpha,0)$ and property $P'_2(\alpha,0)$ also hold.

\item[Third Induction Step ($n+1$)]
First we show $P_1'(\alpha,n+1)$, assuming that $P_1'(\alpha,n)$ holds. If $\vgr{\mathsf{A}\theta}{T_{\mathsf{P}}^{n+1}(\baseap{\alpha})}\geq T_\alpha$, then there exists a clause $\mathsf{A}\theta\leftarrow \mathsf{L}_1,\ldots, \mathsf{L}_k$ in $\mathsf{Gr(P)}$ such that for each $i\leq k$, $\vgr{\mathsf{L}_i}{T_{\mathsf{P}}^{n}(\baseap{\alpha})}\geq T_\alpha$.
This clause is a ground instance of a clause $\mathsf{p}\,\mathsf{V}_1\,\cdots\,\mathsf{V}_r \leftarrow \mathsf{K}_1,\ldots, \mathsf{K}_k$ in the higher-order program and there exists a substitution $\theta''$, such that $(\mathsf{p}\,\mathsf{V}_1\,\cdots\,\mathsf{V}_r)\theta''=\mathsf{A}$ and, for any variable $\mathsf{V}\not\in \{\mathsf{V}_1,\ldots,\mathsf{V}_r\}$ appearing in the body of the clause, $\theta''(\mathsf{V})$ is an appropriate ground term, so that $\mathsf{L}_i = \mathsf{K}_i\theta''\theta$ for all $i\leq k$. As we observed earlier, the variables appearing in the clause $(\mathsf{p}\,\mathsf{V}_1\,\cdots\,\mathsf{V}_r)\theta'' \leftarrow \mathsf{K}_1\theta'',\ldots, \mathsf{K}_k\theta''$ are exactly the variables appearing in $\mathsf{A}$ and they are all of types simpler than $\pi$.
We can distinguish the following cases for each $\mathsf{K}_i\theta''$:
\begin{enumerate}
\item \emph{$\mathsf{K}_i\theta''$ is a positive literal:} A positive literal may take one of the three forms that we examined in our proof for property $P'_1(0,n+1)$. We can show that $\vgr{\mathsf{K}_i\theta''\theta'}{\wgp}\geq T_\alpha$ by the same arguments we used in each case.
\item \emph{$\mathsf{K}_i\theta''$ is a negative literal and its atom starts with a predicate constant:}
Let $\mathsf{K}_i\theta''$ be of the form $\pnot \mathsf{B}$, where $\mathsf{B}$ is an atom that starts with a predicate constant. It is $\vgr{\pnot\mathsf{B}\theta}{T_{\mathsf{P}}^{n}(\baseap{\alpha})}=\vgr{\mathsf{K}_i\theta''\theta}{T_{\mathsf{P}}^{n}(\baseap{\alpha})}=\vgr{\mathsf{L}_i}{T_{\mathsf{P}}^{n}(\baseap{\alpha})}\geq T_\alpha$. Then $\vgr{\mathsf{B}\theta}{T_{\mathsf{P}}^{n}(\baseap{\alpha})}< F_\alpha$, i.e. $\vgr{\mathsf{B}\theta}{T_{\mathsf{P}}^{n}(\baseap{\alpha})}= F_\beta$ for some ordinal $\beta<\alpha$. By Lemma \ref{lm:TPn-and-TPomega}, $\vgr{\mathsf{B}\theta}{M_{\alpha}}= F_\beta$ and thus, by Lemma \ref{lm:approx_alpha_eq}, $\vgr{\mathsf{B}\theta}{M_\beta}= F_\beta$. By $vars(\mathsf{K}_i\theta'') \subseteq vars(\mathsf{A})$, all the variables of $\mathsf{B}$ are of types simpler than $\pi$, so we can apply the second induction hypothesis, in particular property $P_2(\beta)$, to $\mathsf{B}\theta$ and conclude that $\vgr{\mathsf{B}\theta'}{\wgp}= F_\beta$. Then $\vgr{\mathsf{K}_i\theta''\theta'}{\wgp}=\vgr{\pnot \mathsf{B}\theta'}{\wgp}= T_{\beta+1}\geq T_\alpha$.
\item \emph{$\mathsf{K}_i\theta''$ is a negative literal and its atom starts with a predicate variable:}
Let $\mathsf{K}_i\theta''=\pnot (\mathsf{V}\,\mathsf{E}_1\,\cdots\,\mathsf{E}_{m'})$ for some $\mathsf{V}\in vars(\mathsf{A})$. Then $\mathsf{B}=\theta(\mathsf{V})\,\mathsf{E}_1\,\cdots\,\mathsf{E}_{m'}$ is an atom that begins with a predicate constant and, by $vars(\mathsf{K}_i\theta'') \subseteq vars(\mathsf{A})$, all the variables of $\mathsf{B}$ are of types simpler than $\pi$. Also, $\vgr{\pnot\mathsf{B}\theta}{T_{\mathsf{P}}^{n}(\baseap{\alpha})} =\vgr{\mathsf{K}_i\theta''\theta}{T_{\mathsf{P}}^{n}(\baseap{\alpha})}=\vgr{\mathsf{L}_i}{T_{\mathsf{P}}^{n}(\baseap{\alpha})}\geq T_\alpha$. Then $\vgr{\mathsf{B}\theta}{T_{\mathsf{P}}^{n}(\baseap{\alpha})}< F_{\alpha}$, i.e. $\vgr{\mathsf{B}\theta}{T_{\mathsf{P}}^{n}(\baseap{\alpha})}=F_{\beta}$ for some ordinal $\beta< \alpha$. By Lemma~\ref{lm:TPn-and-TPomega}, $\vgr{\mathsf{B}\theta}{M_{\alpha}}= F_\beta$ and thus, by Lemma~\ref{lm:approx_alpha_eq}, $\vgr{\mathsf{B}\theta}{M_\beta}= F_\beta$. By the second induction hypothesis, property $P_2(\beta)$ gives us that $\vgr{\mathsf{B}\theta'}{\wgp}=\vgr{\theta(\mathsf{V})\,\mathsf{E}_1\theta'\,\cdots\,\mathsf{E}_{m'}\theta'}{\wgp}=F_\beta$~(1). The types of all arguments of $\theta(\mathsf{V})$, i.e. the types of $\mathsf{E}_j\theta'$ for all $j\leq m'$, are simpler than the type of  $\mathsf{V}$ and consequently, since $\mathsf{V} \in vars(\mathsf{A})$, simpler than $\pi$. For each $j\leq m'$, let $\rho_j$ be the type of $\mathsf{E}_j$ and let $\rho$ be the type of $\mathsf{V}$; by the first induction hypothesis, $\mathsf{E}_j\theta'\exeq[\wgp,\rho_j] \mathsf{E}_j\theta'$. Moreover, by assumption we have that $\theta(\mathsf{V}) \exeq[\wgp,\rho] \theta'(\mathsf{V})$. Then, by definition $\vgr{\theta(\mathsf{V})\,\mathsf{E}_1\theta'\,\cdots\,\mathsf{E}_{m'}\theta'}{\wgp}=\vgr{\theta'(\mathsf{V})\,\mathsf{E}_1\theta'\,\cdots\,\mathsf{E}_{m'}\theta'}{\wgp}$ and, by (1), $\vgr{\theta'(\mathsf{V})\,\mathsf{E}_1\theta'\,\cdots\,\mathsf{E}_{m'}\theta'}{\wgp}=F_\beta$. Therefore, it follows that $\vgr{\mathsf{K}_i\theta''\theta'}{\wgp}=\vgr{\pnot (\theta'(\mathsf{V})\,\mathsf{E}_1\theta'\,\cdots\,\mathsf{E}_{m'}\theta')}{\wgp}=T_{\beta+1}\geq T_\alpha$.
\end{enumerate}
We can conclude that $\vgr{\mathsf{A}\theta'}{\wgp}\geq T_\alpha$, as the clause $\mathsf{A}\theta' \leftarrow \mathsf{K}_1\theta''\theta',\ldots, \mathsf{K}_k\theta''\theta'$ is in $\mathsf{Gr(P)}$ and we have shown that, for each $i\leq k$, $\vgr{\mathsf{K}_i\theta''\theta'}{\wgp}\geq T_\alpha$.

This concludes the proof of $P'_1(\alpha,n)$. Next we prove $P'_2(\alpha,n+1)$, assuming $P'_2(\alpha,n)$ holds. If $\vgr{\mathsf{A}\theta}{T_{\mathsf{P}}^{n+1}(\baseap{\alpha})}>F_\alpha$, then there exists a clause $\mathsf{A}\theta\leftarrow \mathsf{L}_1,\ldots, \mathsf{L}_k$ in $\mathsf{Gr(P)}$ such that for each $i\leq k$, $\vgr{\mathsf{L}_i}{T_{\mathsf{P}}^{n}(\baseap{\alpha})}>F_\alpha$.
This clause is a ground instance of a clause $\mathsf{p}\,\mathsf{V}_1\,\cdots\,\mathsf{V}_r \leftarrow \mathsf{K}_1,\ldots, \mathsf{K}_k$ in the higher-order program and there exists a substitution $\theta''$such that $(\mathsf{p}\,\mathsf{V}_1\,\cdots\,\mathsf{V}_r)\theta''=\mathsf{A}$ and, for any variable $\mathsf{V}\not\in \{\mathsf{V}_1,\ldots,\mathsf{V}_r\}$ appearing in the body of the clause, $\theta''(\mathsf{V})$ is an appropriate ground term, so that $\mathsf{L}_i = \mathsf{K}_i\theta''\theta$ for all $i\leq k$. Again we observe that the variables appearing in the clause $(\mathsf{p}\,\mathsf{V}_1\,\cdots\,\mathsf{V}_r)\theta'' \leftarrow \mathsf{K}_1\theta'',\ldots, \mathsf{K}_k\theta''$ are exactly the variables appearing in $\mathsf{A}$, which are all of types simpler than $\pi$, and distinguish the following cases for each $\mathsf{K}_i\theta''$:
\begin{enumerate}
\item \emph{$\mathsf{K}_i\theta''$ is a positive literal:} A positive literal may take one of the three forms that we examined in our proof for property $P'_1(0,n+1)$. We can show that $\vgr{\mathsf{K}_i\theta''\theta'}{\wgp}> F_\alpha$ by the same arguments we used in each case.
\item \emph{$\mathsf{K}_i\theta''$ is a negative literal and its atom starts with a predicate constant:}
Let $\mathsf{K}_i\theta''$ be of the form $\pnot \mathsf{B}$, where $\mathsf{B}$ is an atom that starts with a predicate constant and, by $vars(\mathsf{K}_i\theta'') \subseteq vars(\mathsf{A})$, all the variables of $\mathsf{B}$ are of types simpler than $\pi$. It is $\vgr{\pnot\mathsf{B}\theta}{T_{\mathsf{P}}^{n}(\baseap{\alpha})} =\vgr{\mathsf{K}_i\theta''\theta}{T_{\mathsf{P}}^{n}(\baseap{\alpha})}=\vgr{\mathsf{L}_i}{T_{\mathsf{P}}^{n}(\baseap{\alpha})}> F_\alpha$ and we claim that $\vgr{\pnot \mathsf{B}\theta'}{\wgp}> F_\alpha$. For the sake of contradiction, assume that $\vgr{\pnot\mathsf{B}\theta'}{\wgp}\leq F_\alpha$. Then $\vgr{\mathsf{B}\theta'}{\wgp}> T_\alpha$, i.e. $\vgr{\mathsf{B}\theta'}{\wgp}= T_\beta$ for some ordinal $\beta<\alpha$. Therefore, Lemma \ref{lm:approx_alpha_eq} implies that $\vgr{\mathsf{B}\theta'}{M_\beta}= T_\beta$. This means that, if we reverse the roles of $\theta$ and $\theta'$, we can apply the second induction hypothesis to $\mathsf{B}\theta'$ and use property $P_1(\beta)$ to conclude that $\vgr{\mathsf{B}\theta}{\wgp}= T_\beta$. By Lemma \ref{lm:approx_alpha_eq}, this implies that $\vgr{\mathsf{B}\theta}{M_{\alpha}}= T_\beta$ and, by Lemma \ref{lm:TPn-and-TPomega}, $\vgr{\mathsf{B}\theta}{T_{\mathsf{P}}^{n}(\baseap{\alpha})}=T_\beta$. Then $\vgr{\pnot \mathsf{B}\theta}{T_{\mathsf{P}}^{n}(\baseap{\alpha})}= F_{\beta+1}\leq F_\alpha$, which is obviously a contradiction. So it must be $\vgr{\mathsf{K}_i\theta''\theta'}{\wgp}=\vgr{\pnot \mathsf{B}\theta'}{\wgp}> F_\alpha$.
\item \emph{$\mathsf{K}_i\theta''$ is a negative literal and its atom starts with a predicate variable:}
Let $\mathsf{K}_i\theta''=\pnot (\mathsf{V}\,\mathsf{E}_1\,\cdots\,\mathsf{E}_{m'})$ for some $\mathsf{V}\in vars(\mathsf{A})$. Then $\mathsf{B}=\theta(\mathsf{V})\,\mathsf{E}_1\,\cdots\,\mathsf{E}_{m'}$ is an atom that begins with a predicate constant and, by $vars(\mathsf{K}_i\theta'') \subseteq vars(\mathsf{A})$, all the variables of $\mathsf{B}$ are of types simpler than $\pi$. Also, $\vgr{\pnot\mathsf{B}\theta}{T_{\mathsf{P}}^{n}(\baseap{\alpha})} =\vgr{\mathsf{K}_i\theta''\theta}{T_{\mathsf{P}}^{n}(\baseap{\alpha})}=\vgr{\mathsf{L}_i}{T_{\mathsf{P}}^{n}(\baseap{\alpha})}> F_\alpha$. We claim that $\vgr{\pnot\mathsf{B}\theta'}{\wgp}>F_{\alpha}$. Again, assume that this is not so, i.e. $\vgr{\pnot\mathsf{B}\theta'}{\wgp}\leq F_{\alpha}$; then $\vgr{\mathsf{B}\theta'}{\wgp}=T_\beta> T_{\alpha}$ for some ordinal $\beta < \alpha$. By Lemma \ref{lm:approx_alpha_eq}, $\vgr{\mathsf{B}\theta'}{M_\beta}=T_\beta$ and the second induction hypothesis applies. So if we reverse the roles of $\theta$ and $\theta'$ property $P_1(\beta)$ gives us that $\vgr{\mathsf{B}\theta}{\wgp}=T_\beta$. By Lemma \ref{lm:approx_alpha_eq}, $\vgr{\mathsf{B}\theta}{M_{\alpha}}= T_\beta$ and thus, by Lemma \ref{lm:TPn-and-TPomega}, $\vgr{\mathsf{B}\theta}{T_{\mathsf{P}}^{n}(\baseap{\alpha})}= T_\beta$. This is a contradiction, as it implies that $\vgr{\pnot\mathsf{B}\theta}{T_{\mathsf{P}}^{n}(\baseap{\alpha})}=F_{\beta+1}\leq F_\alpha$. Therefore it must hold that $\vgr{\pnot\mathsf{B}\theta'}{\wgp}=\vgr{\pnot(\theta(\mathsf{V})\,\mathsf{E}_1\theta'\,\cdots\,\mathsf{E}_{m'}\theta')}{\wgp}> F_{\alpha}$~(1). The types of all arguments of $\theta(\mathsf{V})$, i.e. the types of $\mathsf{E}_j\theta'$ for all $j\leq m'$, are simpler than the type of $\mathsf{V}$ and therefore simpler than $\pi$. For each $j\leq m'$, let $\rho_j$ be the type of $\mathsf{E}_j$ and let $\rho$ be the type of $\mathsf{V}$; by the first induction hypothesis, $\mathsf{E}_j\theta'\exeq[\wgp,\rho_j] \mathsf{E}_j\theta'$ and by assumption $\theta(\mathsf{V}) \exeq[\wgp,\rho] \theta'(\mathsf{V})$. Then, by definition $\vgr{\theta(\mathsf{V})\,\mathsf{E}_1\theta'\,\cdots\,\mathsf{E}_{m'}\theta'}{\wgp}=\vgr{\theta'(\mathsf{V})\,\mathsf{E}_1\theta'\,\cdots\,\mathsf{E}_{m'}\theta'}{\wgp}$ and, consequently, $\vgr{\pnot(\theta(\mathsf{V})\,\mathsf{E}_1\theta'\,\cdots\,\mathsf{E}_{m'}\theta')}{\wgp}=\vgr{\pnot(\theta'(\mathsf{V})\,\mathsf{E}_1\theta'\,\cdots\,\mathsf{E}_{m'}\theta')}{\wgp}$. Therefore by (1), $\vgr{\pnot(\theta'(\mathsf{V})\,\mathsf{E}_1\theta'\,\cdots\,\mathsf{E}_{m'}\theta')}{\wgp}>F_\alpha$
and so we can conclude that
 $\vgr{\mathsf{K}_i\theta''\theta'}{\wgp}=\vgr{\pnot (\theta'(\mathsf{V})\,\mathsf{E}_1\theta'\,\cdots\,\mathsf{E}_{m'}\theta')}{\wgp}> F_\alpha$.
\end{enumerate}
Observe that the clause $\mathsf{A}\theta' \leftarrow \mathsf{K}_1\theta''\theta',\ldots, \mathsf{K}_k\theta''\theta'$ is in $\mathsf{Gr(P)}$ and we have shown that, for each $i\leq k$, $\vgr{\mathsf{K}_i\theta''\theta'}{\wgp}>F_\alpha$. So $\vgr{\mathsf{A}\theta'}{\wgp}> F_\alpha$ must also hold.

%This concludes the proof of $P'_1(\alpha,n)$.  The proof for $P'_2(\alpha,n)$ follows similar arguments and can be found in the full online version of the paper.
This concludes the proof for $P'_2(\alpha,n)$.
We will now use properties $P'_1(\alpha,n)$ and $P'_2(\alpha,n)$ in order to show $P_1(\alpha)$ and $P_2(\alpha)$. By definition, if $\vgr{\mathsf{A}\theta}{M_\alpha}= \vgr{\mathsf{A}\theta}{T^\omega_{\mathsf{P},\alpha}(\baseap{\alpha})}=T_\alpha$, then there exists some $n<\omega$ such that $\vgr{\mathsf{A}\theta}{T_{\mathsf{P}}^{n}(\baseap{\alpha})}=T_\alpha$. As we have shown above, property $P'_1(\alpha,n)$ yields $\vgr{\mathsf{A}\theta'}{\wgp}\geq T_\alpha$. However, if it was $\vgr{\mathsf{A}\theta'}{\wgp}=T_\beta > T_\alpha$ for some ordinal $\beta<\alpha$, then, by Lemma \ref{lm:approx_alpha_eq} we would also have $\vgr{\mathsf{A}\theta'}{M_\beta}=T_\beta$. By the second induction hypothesis we would be able to apply property $P_1(\beta)$ to $\mathsf{A}\theta'$ and infer that $\vgr{\mathsf{A}\theta}{\wgp}=T_\beta$, which, again by Lemma \ref{lm:approx_alpha_eq}, contradicts $\vgr{\mathsf{A}\theta}{M_\alpha}=T_\alpha$. So $\vgr{\mathsf{A}\theta'}{\wgp}$ can only be equal to $T_\alpha$ and property $P_1(\alpha)$ holds. Now let $\vgr{\mathsf{A}\theta}{M_\alpha}=F_\alpha$ and assume $\vgr{\mathsf{A}\theta'}{\wgp}\neq F_\alpha$, i.e. either $\vgr{\mathsf{A}\theta'}{\wgp}< F_\alpha$ or $\vgr{\mathsf{A}\theta'}{\wgp}> F_\alpha$. In the first case, $\vgr{\mathsf{A}\theta'}{\wgp}= F_\beta$ and, by Lemma \ref{lm:approx_alpha_eq}, $\vgr{\mathsf{A}\theta'}{M_\beta}= F_\beta$ for some $\beta<\alpha$. By the second induction hypothesis, property $P_2(\beta)$ gives us that $\vgr{\mathsf{A}\theta}{\wgp}= F_\beta<F_\alpha$. In the second case, Lemma \ref{lm:approx_alpha_eq} implies that $\vgr{\mathsf{A}\theta'}{M_\alpha}=\vgr{\mathsf{A}\theta'}{T_{\mathsf{P},\alpha}^{\omega}(\baseap{\alpha})} > F_\alpha$ and this, in turn, means that there exists at least one $n<\omega$ such that $\vgr{\mathsf{A}\theta'}{T_{\mathsf{P}}^{n}(\baseap{\alpha})}>F_\alpha$. Then, reversing the roles of $\theta$ and $\theta'$, we can apply property $P'_2(\alpha,n)$ to $\mathsf{A}\theta'$ and conclude that $\vgr{\mathsf{A}\theta}{\wgp}>F_\alpha$. In both cases, our conclusion constitutes a contradiction, because, by Lemma \ref{lm:approx_alpha_eq}, $\vgr{\mathsf{A}\theta}{M_\alpha}=F_\alpha$ implies that $\vgr{\mathsf{A}\theta}{\wgp}=F_\alpha$. Therefore it must also be $\vgr{\mathsf{A}\theta'}{\wgp}= F_\alpha$ and this proves property $P_2(\alpha)$.
\qedhere
\end{description}
\end{proof}

\section{Stratified and Locally Stratified Programs}\label{stratification_section}
In this section we define the notions of {\em stratified} and {\em locally stratified} programs and
argue that atoms of such programs never obtain the truth value 0 under the proposed semantics.
The notion of local stratification is a straightforward generalization of the corresponding
notion for classical (first-order) logic programs. However, the notion of stratification is
a genuine extension of the corresponding notion for first-order programs.
\begin{defi}\label{locally-stratified}
A program $\mathsf{P}$ is called {\em locally stratified} if and only if it is possible to decompose the
Herbrand base $U_{\mathsf{P},o}$ of $\mathsf{P}$ into disjoint sets (called {\em strata})
$S_1,S_2,\ldots,S_\alpha,\ldots, \alpha<\gamma$, where $\gamma$ is a countable ordinal, such that
for every clause $\mathsf{H} \leftarrow \mathsf{A}_1,\ldots,\mathsf{A}_m,\pnot \mathsf{B}_1,\ldots,\pnot \mathsf{B}_n$
in $\mathsf{Gr}(\mathsf{P})$, we have that for every $i\leq m$, $\textit{stratum}(\mathsf{A}_i) \leq \textit{stratum}(\mathsf{H})$ and
for every $i\leq n$, $\textit{stratum}(\mathsf{B}_i) < \textit{stratum}(\mathsf{H})$, where $\textit{stratum}$ is a function
such that $\textit{stratum}(\mathsf{C}) = \beta$,
if the atom $\mathsf{C}\in U_{\mathsf{P},o}$ belongs to $S_\beta$, and $\textit{stratum}(\mathsf{C}) = 0$, if  $\mathsf{C}\not\in U_{\mathsf{P},o}$ and is of the form $(\mathsf{E}_1\approx\mathsf{E}_2)$.
\end{defi}
All atoms in the minimum Herbrand model of a locally stratified program have non-zero values:
\begin{lem}
Let $\mathsf{P}$ be a locally stratified logic program. Then, for every atom $\mathsf{A} \in U_{\mathsf{P},o}$
it holds $\wfmp(\mathsf{A}) \neq 0$.
\end{lem}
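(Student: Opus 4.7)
The plan is to proceed by transfinite induction on the stratum of $\mathsf{A}$. The intuition is that local stratification allows one to settle truth values stratum by stratum: negative literals occurring in clauses that define atoms of stratum $\beta$ live in strictly lower strata, so by the induction hypothesis they have already received definite values, and the residual computation within stratum $\beta$ is purely positive and therefore cannot introduce the unknown value $0$.

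Concretely, I would strengthen the statement to the following: for every ordinal $\beta<\gamma$ there exists a countable ordinal $\alpha_\beta$ such that every atom $\mathsf{A}'\in\bigcup_{\delta\leq\beta} S_\delta$ satisfies $order(M_{\alpha_\beta}(\mathsf{A}'))\leq\alpha_\beta$. By Lemma~\ref{lm:approx_alpha_eq} this yields $\wgp(\mathsf{A}')=M_{\alpha_\beta}(\mathsf{A}')$, whose order is at most $\alpha_\beta$, and hence $\wfmp(\mathsf{A}')\neq 0$. The inductive step proceeds by examining an arbitrary clause $\mathsf{H}\leftarrow \mathsf{A}_1,\ldots,\mathsf{A}_m,\pnot\mathsf{B}_1,\ldots,\pnot\mathsf{B}_n$ of $\mathsf{Gr(P)}$ with $\mathsf{H}\in S_\beta$: each $\mathsf{B}_j$ either lies in some $S_\delta$ with $\delta<\beta$ (and so, by the induction hypothesis, $\pnot\mathsf{B}_j$ receives a definite value of order $\leq \alpha_\delta+1$), or is an equality expression (with value $T_0$ or $F_0$ by the convention given just after Definition~\ref{ground_instantiation_definition}). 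The positive literals are either likewise settled from lower strata, or lie in $S_\beta$ itself; the latter participate in a monotone $T_\mathsf{P}$-iteration that stabilizes within at most $\omega$ further steps.

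The main technical obstacle is to pin down $\alpha_\beta$ precisely and to verify, within the approximation scheme of Definition~\ref{approx}, that the within-stratum positive iteration really does stabilize at stage $\alpha_\beta$; this in turn requires a subsidiary induction on the number of $T_\mathsf{P}$-applications, together with Lemma~\ref{lm:TPn-and-TPomega} to pass from $T^n_\mathsf{P}(\baseap{\alpha_\beta})$ to $T^\omega_{\mathsf{P},\alpha_\beta}(\baseap{\alpha_\beta})=M_{\alpha_\beta}$, and a careful treatment of the limit-ordinal case in the definition of $\baseap{\alpha}$.

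A considerably shorter alternative, which sidesteps all of this bookkeeping, is to appeal to Theorem~\ref{collapses}: any stratification of the higher-order program $\mathsf{P}$ immediately induces a stratification of $\mathsf{Gr(P)}$ viewed as a propositional program; the well-founded model of any locally stratified propositional program is classically known to be two-valued; hence the collapse of $\wgp$ has no undefined atom, so $\wgp$ cannot assign the value $0$ to any atom, and therefore $\wfmp(\mathsf{A})\neq 0$ for every $\mathsf{A}\in U_{\mathsf{P},o}$.
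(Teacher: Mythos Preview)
Your ``considerably shorter alternative'' is exactly the proof the paper gives: it observes that Definition~\ref{locally-stratified} is phrased in terms of $\mathsf{Gr(P)}$, so a local stratification of the higher-order program $\mathsf{P}$ \emph{is} a local stratification of the propositional program $\mathsf{Gr(P)}$; then it invokes Theorem~\ref{collapses} together with the classical fact that the well-founded model of a locally stratified propositional program is two-valued, and concludes that $\wgp$---hence $\wfmp$---never assigns $0$. The longer transfinite-induction argument you sketch first is not needed, and the paper does not attempt it.
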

\begin{proof}
Theorem~\ref{collapses} implies that the infinite-valued model $\mgp$ of the ground instantiation of $\mathsf{P}$ assigns the truth value $0$ to an atom iff this atom is assigned the truth value $0$ by the well-founded model of $\mathsf{Gr}(\mathsf{P})$. Notice now that, by Definition~\ref{locally-stratified}, if $\mathsf{P}$ is a locally stratified higher-order program, then $\mathsf{Gr(P)}$ is in turn a locally stratified propositional program (having exactly the same local stratification as $\mathsf{P}$). Recall that the well-founded model of a locally stratified propositional program does not assign the truth value $0$ to any atom~\cite{GelderRS91}, so neither does $\mgp$ or, consequently, $\wfmp$.
\end{proof}

Since Definition~\ref{locally-stratified} generalizes the corresponding one for classical
logic programs, the undecidability result~\cite{CholakB94} for detecting whether a given
program is locally stratified, extends directly to the higher-order case.
\begin{lem}
The problem of determining whether a given logic program $\mathsf{P}$ is locally stratified, is undecidable.
\end{lem}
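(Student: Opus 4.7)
The plan is a direct reduction from the first-order case. The definition of local stratification in Definition~\ref{locally-stratified} is stated entirely in terms of $\mathsf{Gr(P)}$ and the Herbrand base $U_{\mathsf{P},o}$, without any reference to higher-order features. Consequently, if $\mathsf{P}$ happens to be a classical (first-order) logic program, the higher-order notion of local stratification degenerates to the classical one: the ground instantiation in the sense of Definition~\ref{ground_instantiation_definition} then instantiates only individual variables (of type $\iota$) over ground terms built from the individual constants and function symbols of $\mathsf{P}$, which is exactly the classical Herbrand ground instantiation, and no equality atoms of the form $(\mathsf{E}_1\approx \mathsf{E}_2)$ arise in clause bodies.

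First I would observe that the class of classical (first-order) normal logic programs embeds syntactically into $\mathcal{H}$: every such program $\mathsf{P}$ is already a well-formed program of $\mathcal{H}$, where every predicate constant has some type of the form $\iota^n \rightarrow o$ and every variable has type $\iota$. Then I would verify that, under this embedding, $\mathsf{P}$ is locally stratified in the sense of Cholak and Blair~\cite{CholakB94} if and only if it is locally stratified in the sense of Definition~\ref{locally-stratified}. This is immediate because the two notions are defined by the same condition on the same ground instantiation.

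Finally, I would invoke the result of~\cite{CholakB94}, which asserts that determining whether a classical logic program is locally stratified is undecidable. Since the embedding of classical programs into $\mathcal{H}$ is trivially computable and preserves (and reflects) local stratification, any decision procedure for higher-order local stratification would yield one for the first-order case, contradicting~\cite{CholakB94}. Hence the problem is undecidable for $\mathcal{H}$.

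No real obstacle is anticipated: the only subtlety to check is that the higher-order ground instantiation restricted to a first-order program genuinely coincides with the classical one (so the stratification conditions line up literally), which is straightforward from Definition~\ref{ground_instantiation_definition}.
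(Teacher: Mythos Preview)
Your proposal is correct and matches the paper's own argument: the paper simply observes, in the sentence immediately preceding the lemma, that Definition~\ref{locally-stratified} specializes to the classical notion for first-order programs, so the undecidability result of~\cite{CholakB94} carries over directly. Your write-up is just a more explicit rendering of that same reduction.
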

However, there exists a notion of stratification for higher-order logic programs that is decidable
and has as a special case the stratification for classical logic programs~\cite{AptBW88}. In the following definition, a predicate type $\pi$ is understood to be \emph{greater than} a second predicate type $\pi'$, if $\pi$ is of the form $\rho_1\rightarrow\cdots\rightarrow\rho_n\rightarrow\pi'$, where $n\geq1$.

\begin{defi}\label{stratified}
A program $\mathsf{P}$ is called {\em stratified} if and only if it is possible to decompose the
set of all predicate constants that appear in $\mathsf{P}$ into a finite number $r$ of disjoint sets (called {\em strata})
$S_1,S_2,\ldots,S_r$, such that for every clause
$\mathsf{H} \leftarrow \mathsf{A}_1,\ldots,\mathsf{A}_m,\pnot \mathsf{B}_1,\ldots,\pnot \mathsf{B}_n$
in $\mathsf{P}$, where the predicate constant of $\mathsf{H}$ is $\mathsf{p}$, we have:
\begin{enumerate}
\item for every $i\leq m$, if $\mathsf{A}_i$ is a term that starts with a predicate constant $\mathsf{q}$, then
      $\textit{stratum}(\mathsf{q}) \leq \textit{stratum}(\mathsf{p})$;

\item for every $i\leq m$, if $\mathsf{A}_i$ is a term that starts with a predicate variable $\mathsf{Q}$, then
      for all predicate constants $\mathsf{q}$ that appear in $\mathsf{P}$ such that the type of
      $\mathsf{q}$ is greater than or equal to the type of $\mathsf{Q}$, it holds
      $\textit{stratum}(\mathsf{q}) \leq \textit{stratum}(\mathsf{p})$;

\item for every $i\leq n$, if $\mathsf{B}_i$ starts with a predicate constant $\mathsf{q}$, then
      $\textit{stratum}(\mathsf{q}) < \textit{stratum}(\mathsf{p})$;

\item for every $i\leq n$, if $\mathsf{B}_i$ starts with a predicate variable $\mathsf{Q}$, then
      for all predicate constants $\mathsf{q}$ that appear in $\mathsf{P}$ such that the type of
      $\mathsf{q}$ is greater than or equal to the type of $\mathsf{Q}$, it holds
      $\textit{stratum}(\mathsf{q}) < \textit{stratum}(\mathsf{p})$;
\end{enumerate}
where for every predicate constant $\mathsf{r}$,  $\textit{stratum}(\mathsf{r}) = i$ if the predicate symbol
$\mathsf{r}$ belongs to $S_i$.
\end{defi}
\begin{exa}
In the following program:
\[
\begin{array}{l}
\mbox{\tt p Q:-$\pnot$(Q a).}\\
\mbox{\tt q X:-(X$\approx$a).}
\end{array}
\]
the variable {\tt Q} is of type $\iota\rightarrow o$ and {\tt X} is of type $\iota$. The only predicate constants that appear in the program are {\tt p}, which is of type $(\iota\rightarrow o)\rightarrow o$, and {\tt q}, which is of type $\iota\rightarrow o$. Note that the type of {\tt p} is neither equal nor greater than the type of {\tt Q}, while the type of {\tt q} is the same as that of {\tt Q}.
It is straightforward to see that the program is stratified, if we choose $S_1=\{{\tt q}\}$ and $S_2=\{{\tt p}\}$. Indeed, for the first clause,
we have $stratum({\tt p}) > stratum({\tt q})$ and in the second clause there are no predicate constants or predicate variables appearing in its body.
However, if {\tt Q} and {\tt X} are as above and, moreover, {\tt Y} is of type $\iota$, it can easily be checked that the program:
\[
\begin{array}{l}
\mbox{\tt p Q:-$\pnot$(Q a).}\\
\mbox{\tt q X Y:-(X$\approx$a),(Y$\approx$a),p (q a).}
\end{array}
\]
%
%\[
%\begin{array}{l}
%\mbox{\tt p(a):-$\pnot$Q(a).}
%\end{array}
%\]
%
is not stratified nor locally stratified, because if the term {\tt q a} is substituted for {\tt Q} we get
a circularity through negation. Notice that the type of {\tt q} is $\iota\rightarrow \iota \rightarrow o$ and
it is greater than the type of {\tt Q} which is $\iota \rightarrow o$.\qed
\end{exa}

Since the set of predicate constants that appear in a program $\mathsf{P}$ is finite, and since the number
of predicate constants of the program that have a greater or equal type than the type of a given predicate variable is
also finite, it follows that checking whether a given program is stratified, is decidable. Moreover, we
have the following theorem:
\begin{thm}
If $\mathsf{P}$ is stratified then it is locally stratified.
\end{thm}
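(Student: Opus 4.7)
The plan is to use the stratification $S_1,\ldots,S_r$ of predicate constants to induce a local stratification of $U_{\mathsf{P},o}$ by assigning each ground atom to the stratum of its head predicate constant. Concretely, I would first observe that every ground atom $\mathsf{A}\in U_{\mathsf{P},o}$ has the form $\mathsf{q}\,\mathsf{t}_1\cdots\mathsf{t}_n$ for a unique predicate constant $\mathsf{q}$ (an easy induction on the formation of terms of type $o$, since type $o$ can only be reached by successive applications starting from a predicate constant). Then I would define $S'_i = \{\mathsf{A}\in U_{\mathsf{P},o} : \mathsf{head}(\mathsf{A})\in S_i\}$ for $1\le i\le r$, giving $\gamma=r+1$, and show that the resulting function $\textit{stratum}$ on $U_{\mathsf{P},o}$ (extended by $0$ on equality atoms, as prescribed by Definition~\ref{locally-stratified}) witnesses local stratification.

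The auxiliary fact I need is that if $\mathsf{t}$ is a ground term of predicate type $\rho$, then the predicate constant at the head of $\mathsf{t}$ has type greater than or equal to $\rho$. This follows by straightforward induction on the structure of $\mathsf{t}$: a term $\mathsf{t}$ of predicate type is either a predicate constant of type $\rho$ (same type) or of the form $\mathsf{E}_1\,\mathsf{E}_2$ with $\mathsf{E}_1:\rho'\rightarrow\rho$; by the induction hypothesis applied to $\mathsf{E}_1$, its head has type $\rho'_1\rightarrow\cdots\rightarrow\rho'_k\rightarrow\rho'\rightarrow\rho$ for some $k\ge0$, which is greater than $\rho$.

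Given these observations, I would verify the local stratification condition on each clause of $\mathsf{Gr}(\mathsf{P})$. Such a clause is obtained from some clause $\mathsf{p}\,\mathsf{V}_1\cdots\mathsf{V}_k \leftarrow \mathsf{L}_1,\ldots,\mathsf{L}_m$ of $\mathsf{P}$ via a ground substitution $\theta$, so its head has stratum $\textit{stratum}(\mathsf{p})$. For each $\mathsf{L}_i\theta$ in the body, I would distinguish cases on $\mathsf{L}_i$. Positive equality literals trivially satisfy $\textit{stratum}=0\le \textit{stratum}(\mathsf{p})$. A positive literal $\mathsf{A}_i$ headed by a predicate constant $\mathsf{q}$ retains $\mathsf{q}$ as the head of $\mathsf{A}_i\theta$, so clause~(1) of Definition~\ref{stratified} yields $\textit{stratum}(\mathsf{q})\le \textit{stratum}(\mathsf{p})$. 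A positive literal $\mathsf{A}_i$ headed by a predicate variable $\mathsf{Q}:\rho$ becomes, under $\theta$, an atom whose head is the head $\mathsf{q}'$ of the ground term $\theta(\mathsf{Q})$; by the auxiliary fact, $\mathsf{q}'$ has type greater than or equal to $\rho$, so clause~(2) of Definition~\ref{stratified} again yields $\textit{stratum}(\mathsf{q}')\le \textit{stratum}(\mathsf{p})$. The two negative-literal cases proceed identically using clauses~(3) and~(4) of Definition~\ref{stratified}, giving strict inequality $\textit{stratum}(\mathsf{B}_i\theta)<\textit{stratum}(\mathsf{p})$, as required.

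There is no real obstacle here: the definition of stratification was designed precisely so that the type of every predicate constant that can legitimately appear at the head of a ground instance of a predicate variable is controlled, and this is exactly what the ``greater than or equal to the type of $\mathsf{Q}$'' clauses in Definition~\ref{stratified} enforce. The only point to be careful about is the bookkeeping case analysis on the form of each body literal and on whether its head is a predicate constant or a variable, together with the observation that ground expressions have predicate constants at their heads.
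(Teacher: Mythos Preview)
Your proposal is correct and follows essentially the same route as the paper: both define the local stratification by assigning each ground atom the stratum of its leftmost (head) predicate constant, and both verify the conditions of Definition~\ref{locally-stratified} by a case analysis on whether each body literal is headed by a predicate constant or a predicate variable. Your treatment is in fact slightly more explicit than the paper's---you spell out, via a short induction, the auxiliary fact that the head predicate constant of a ground term of type $\rho$ has type greater than or equal to $\rho$ (the paper merely asserts this), and you handle the equality-literal case explicitly (the paper omits it).
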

\begin{proof}
Consider a decomposition $S_1,\ldots,S_r$ of the set of predicate constants of $\mathsf{P}$ such that the
requirements of Definition~\ref{stratified} are satisfied. This defines a
decomposition $S'_1,\ldots,S'_r$ of the Herbrand base of $\mathsf{P}$, as follows:
$$S'_i = \{\mathsf{A} \in U_{\mathsf{P},o} \mid \mbox{the leftmost predicate constant of $\mathsf{A}$ belongs to $S_i$}\}$$
We show that $S'_1,\ldots,S'_r$ corresponds to a local stratification of $U_{\mathsf{P},o}$.
Consider a clause in $\mathsf{P}$ of the form  $\mathsf{H}' \leftarrow \mathsf{A}'_1,\ldots,\mathsf{A}'_m,\pnot \mathsf{B}'_1,\ldots,\pnot \mathsf{B}'_n$
and let $\mathsf{H} \leftarrow \mathsf{A}_1,\ldots,\mathsf{A}_m,\pnot \mathsf{B}_1,\ldots,\pnot \mathsf{B}_n$
be one of its ground instances. Let $\mathsf{p}$ be the predicate constant of $\mathsf{H}$ (and $\mathsf{H}'$).
Consider any $\mathsf{A}'_i$. If $\mathsf{A}'_i$ starts with a predicate constant, say $\mathsf{q}_i$,
by Definition~\ref{stratified}, it is $\textit{stratum}(\mathsf{p}) \geq \textit{stratum}(\mathsf{q}_i)$.
By the definition of the local stratification decomposition we gave above, it is
$\textit{stratum}(\mathsf{H}) = \textit{stratum}(\mathsf{p})$ and
$\textit{stratum}(\mathsf{A}_i) = \textit{stratum}(\mathsf{q}_i)$,
and therefore $\textit{stratum}(\mathsf{H})\geq \textit{stratum}(\mathsf{A}_i)$.
If $\mathsf{A}'_i$ starts with a predicate variable, say $\mathsf{Q}$, then
$\mathsf{Q}$ has been substituted in $\mathsf{A}'_i$ with a term starting with a
predicate constant, say $\mathsf{q}_i$, that has a type greater than or equal
to that of $\mathsf{Q}$. By Definition~\ref{stratified}, it is
$\textit{stratum}(\mathsf{p}) \geq \textit{stratum}(\mathsf{q}_i)$ and by the
definition of the local stratification decomposition we gave above, it is
$\textit{stratum}(\mathsf{H}) = \textit{stratum}(\mathsf{p})$ and
$\textit{stratum}(\mathsf{A}_i) = \textit{stratum}(\mathsf{q}_i)$,
and therefore $\textit{stratum}(\mathsf{H})\geq \textit{stratum}(\mathsf{A}_i)$.
The justification for the case of negative literals, is similar and omitted.
\end{proof}

\section{Non-Extensionality of the Stable Models}\label{stable}
It is natural to wonder whether the more traditional approaches to the semantics of negation
in logic programming, also lead to extensional semantics when applied to higher-order logic
programs with negation under the framework of~\cite{Bezem99,Bezem01}. The two most widely known
such approaches are the {\em stable model} semantics~\cite{GL88} and the {\em well-founded}~\cite{GelderRS91}
one. It was recently demonstrated~\cite{RS17} that the well-founded approach does not in general lead to 
an extensional well-founded model when applied to higher-order programs. In this section we demonstrate 
that the stable model semantics also fails to give extensional models in the general case.
We believe that these two negative results reveal the importance of the use of the infinite-valued
approach for obtaining extensionality. 

The stable model semantics, in its original form introduced in~\cite{GL88}, is applied on the
ground instantiation of a given first-order logic program with negation, which is a (possibly infinite)
propositional program. In this respect, it is not hard to adapt it to apply to the framework
of~\cite{Bezem99,Bezem01}, which is based on the ground instantiation of a given higher-order
program (which is again a possibly infinite propositional program). For reasons of completeness,
we include the definition of stable models~\cite{GL88} (see also~\cite{Lif08}).
\begin{defi}
Let $\mathsf{P}$ be a propositional program and let $I$ be a set of propositional variables.
The {\em reduct} $\mathsf{P}_I$ of $\mathsf{P}$ with respect to $I$,  is the set of rules
without negation that can be obtained from $\mathsf{P}$ by first dropping every rule of
$\mathsf{P}$ that contains a negative literal $\pnot {\tt p}$ in its body such that
${\tt p} \in I$, and then dropping all negative literals from the bodies of all the remaining
rules. The set $I$ is called a {\em stable model} of $\mathsf{P}$ if $I$ coincides with the
least model of~$\mathsf{P}_I$.
\end{defi}

To demonstrate the non-extensionality of the stable models approach in the case of higher-order
programs, it suffices to find a program that produces non-extensional stable models. The following
very simple example does exactly this.
\begin{exa}
Consider the higher-order program:
\[
\begin{array}{l}
\mbox{\tt r(Q):-$\pnot$s(Q).}\\
\mbox{\tt s(Q):-$\pnot$r(Q).}\\
\mbox{\tt q(a).}\\
\mbox{\tt p(a).}
\end{array}
\]
where the predicate variable {\tt Q} of the first and second clause is of type $\iota\rightarrow o$. We at first take the ground instantiation of the above program:
\[
\begin{array}{l}
\mbox{\tt r(p):-$\pnot$s(p).}\\
\mbox{\tt r(q):-$\pnot$s(q).}\\
\mbox{\tt s(p):-$\pnot$r(p).}\\
\mbox{\tt s(q):-$\pnot$r(q).}\\
\mbox{\tt q(a).}\\
\mbox{\tt p(a).}
\end{array}
\]
Consider now the Herbrand interpretation $M=\{\mbox{\tt p(a)},\mbox{\tt q(a)},\mbox{\tt s(p)},\mbox{\tt r(q)}\}$.
One can easily check that $M$ is a model of the ground instantiation of the program.
However, the above model is not extensional: since {\tt p} and {\tt q} are extensionally equal,
the atoms {\tt s(q)} and {\tt r(p)} should also belong to $M$ in order to
ensure extensionality.  It  remains to show that $M$ is
also a stable model. Consider the reduct of the above program based
on $M$:
\[
\begin{array}{l}
\mbox{\tt r(q).}\\
\mbox{\tt s(p).}\\
\mbox{\tt q(a).}\\
\mbox{\tt p(a).}
\end{array}
\]
Obviously, the least model of the reduct is the interpretation $M$, and therefore
$M$ is a stable model of the initial program. In other words, we have found a
program with a non-extensional stable model.\qed
\end{exa}
Continuing the discussion on the above example, one can easily verify that the
above program also has two extensional stable models, namely
$M_1=\{\mbox{\tt p(a)},\mbox{\tt q(a)},\mbox{\tt s(p)},\mbox{\tt s(q)}\}$
and $M_2=\{\mbox{\tt p(a)},\mbox{\tt q(a)},\mbox{\tt r(p)},\mbox{\tt r(q)}\}$.
This creates the hope that we could somehow adapt the standard stable model
construction procedure in order to produce only extensional stable models.
The following example makes this hope vanish.
\begin{exa}
Consider the program:
\[
\begin{array}{l}
\mbox{\tt r(Q):-$\pnot$s(Q),$\pnot$r(p).}\\
\mbox{\tt s(Q):-$\pnot$r(Q),$\pnot$s(q).}\\
\mbox{\tt q(a).}\\
\mbox{\tt p(a).}
\end{array}
\]
where, in the first two clauses, {\tt Q} is of type $\iota\rightarrow o$. The ground instantiation of the program is the following:
\[
\begin{array}{l}
\mbox{\tt r(p):-$\pnot$s(p),$\pnot$r(p).}\\
\mbox{\tt r(q):-$\pnot$s(q),$\pnot$r(p).}\\
\mbox{\tt s(p):-$\pnot$r(p),$\pnot$s(q).}\\
\mbox{\tt s(q):-$\pnot$r(q),$\pnot$s(q).}\\
\mbox{\tt q(a).}\\
\mbox{\tt p(a).}
\end{array}
\]
This program has the non-extensional stable model $M=\{\mbox{\tt p(a)},\mbox{\tt q(a)},\mbox{\tt s(p)},\mbox{\tt r(q)}\}$.
However, it has {\em no} extensional stable models: there are four possible extensional interpretations
that are potential candidates, namely $M_1=\{\mbox{\tt p(a)},\mbox{\tt q(a)}\}$,
$M_2=\{\mbox{\tt p(a)},\mbox{\tt q(a)},\mbox{\tt r(p)},\mbox{\tt r(q)}\}$,
$M_3=\{\mbox{\tt p(a)},\mbox{\tt q(a)},\mbox{\tt s(p)},\mbox{\tt s(q)}\}$,
and $M_4=\{\mbox{\tt p(a)},\mbox{\tt q(a)},\mbox{\tt s(p)},\mbox{\tt s(q)},\mbox{\tt r(p)},\mbox{\tt r(q)}\}$;
one can easily verify that none of these interpretations is a stable model of the
ground instantiation of the program. The conclusion is that there exist higher-order logic
programs with negation which have only non-extensional stable models! \qed
\end{exa}

The above examples seem to suggest that the extensional approach of~\cite{Bezem99,Bezem01}
is incompatible with the stable model semantics. Possibly this behaviour can be explained
by an inherent characteristic of the stable model semantics, which appears even in the case
of classical logic programs with negation. Consider for example the simple propositional program:
\[
\begin{array}{l}
\mbox{\tt p:-$\pnot$q.}\\
\mbox{\tt q:-$\pnot$p.}
\end{array}
\]
The above program has two stable models, namely $\{\mbox{\tt p}\}$ and $\{\mbox{\tt q}\}$.
In the first stable model, {\tt p} is true and {\tt q} is false despite the fact that the
program is completely symmetric and there is no apparent reason to prefer {\tt p} over {\tt q};
a similar remark applies to the second stable model. It is possible that it is this characteristic
of stable models (i.e., the resolution of negative circularities by making specific choices that
are not necessarily symmetric) that leads to their non-extensionality.
%It is unclear to us whether there is
%an alternative (non-standard) definition of stable models that would guarantee
%the production of only extensional models.

\section{Connections to the Research of Zolt\'{a}n \'{E}sik}\label{conclusions}
The work reported in this paper is closely connected with research conducted by Zolt\'{a}n \'{E}sik
in collaboration with the first author (Panos Rondogiannis). In this section we briefly describe the
roots of this joint collaboration which unfortunately was abruptly interrupted by the untimely loss
of Zolt\'{a}n.

In 2005, the first author together with Bill Wadge proposed~\cite{RondogiannisW05} the infinite-valued
semantics for logic programs with negation (an overview of this work is given in Section~\ref{infinite_valued}
of the present paper). In 2013, the first author together with Zolt\'{a}n \'{E}sik started a collaboration
supported by a Greek-Hungarian Scientific Collaboration Program with title ``Extensions and Applications
of Fixed Point Theory for Non-Monotonic Formalisms''. The purpose of the program was to create
an abstract fixed point theory based on the infinite-valued approach, namely a theory that
would not only be applicable to logic programs but also to other non-monotonic formalisms.
This abstract theory was successfully developed and is described in detail in~\cite{ER14}.
As an application of these results, the first extensional semantics for higher-order
logic programs with negation was developed in~\cite{CharalambidisER14}. Another application 
of this new theory to the area of non-monotonic formal grammars was proposed in~\cite{ER14a}. 
Moreover, Zolt\'{a}n himself further investigated the foundations and the properties of the 
infinite-valued approach~\cite{Esik15}, highlighting some of its desirable characteristics.

The work reported in the present paper is an alternative approach to the semantics of higher-order
logic programs with negation developed in collaboration with Zolt\'{a}n in~\cite{CharalambidisER14}.
The two approaches both use the infinite-valued approach but are radically different otherwise.
In particular, the present approach heavily relies on the ground instantiation of the source program,
while the approach of~\cite{CharalambidisER14} operates directly on the source program and extensively
uses domain theoretic constructions. It is easy to find a program where our approach gives
a different denotation from that of~\cite{CharalambidisER14}. Actually, the two approaches
differ even for positive programs. The example given below is borrowed from~\cite{CharalambidisRS15}
where it was used to demonstrate the differences between the approach of~\cite{Bezem99,Bezem01}
versus that of~\cite{CharalambidisHRW13} (which applies to positive programs but has
similarities to~\cite{CharalambidisER14} because it is also based on domain theory).
\begin{exa}\label{non-equivalent}
Consider the following program:
\[
\begin{array}{l}
\mbox{\tt p(a):-Q(a).}
\end{array}
\]
where the predicate variable {\tt Q} is of type $\iota\rightarrow o$. Under the semantics developed in this paper (which generalizes that of~\cite{Bezem99,Bezem01}),
the above program intuitively states that {\tt p} is true of {\tt a} if there exists
{\em a predicate that is defined in the program} that is true of {\tt a}. If we take the
ground instantiation of the above program:
\[
\begin{array}{l}
\mbox{\tt p(a):-p(a).}
\end{array}
\]
and compute the least model of the instantiation, we find the least model of
the program which assigns to the atom {\tt p(a)} the truth value $F_0$.

Under the semantics of~\cite{CharalambidisER14} the atom {\tt p(a)} is true
in the minimum Herbrand model of the initial program. This is because
in this semantics, the initial program reads (intuitively speaking) as follows:
``{\tt p(a)} is true if there exists a {\em relation} that is true of {\tt a}''; actually,
there exists one such relation, namely one in which the constant {\tt a} has the value $T_0$.
This discrepancy between the semantics of~\cite{Bezem99,Bezem01} and that of~\cite{CharalambidisER14}
is due to the fact that the latter is based on (infinite-valued) {\em sets} and not on the syntactic
entities that appear  in the program.\qed
\end{exa}
Despite the above difference, there certainly exists common ground between
the two techniques. As it is demonstrated in~\cite{CharalambidisRS15}, there exists
a large and useful class of positive programs for which the approach
of~\cite{Bezem99,Bezem01} coincides with that of~\cite{CharalambidisHRW13}.
Intuitively speaking, this class contains all positive programs that do not contain
{\em existential} variables in the bodies of clauses (like $\mathsf{Q}$ in the
above example). We believe that such a result also holds for programs with negation.
More specifically, we conjecture that there exists a large fragment of $\mathcal{H}$
for which the semantics of~\cite{CharalambidisER14} coincides with the one given in
the present paper. Establishing such a relationship between the two semantics, will
lead to a better understanding of extensional higher-order logic programming.

\bibliographystyle{alpha}
%\bibliography{jelia2016}

\end{document}